\crefname{ineq}{inequality}{inequalities}
\crefname{claim}{Claim}{Claims}
\pgfplotsset{compat=1.14}
\newcommand*{\algotitle}[2]{%
	\stepcounter{algocf}%
	\hypertarget{algocf.title.\theHalgocf}{}%
	\NR@gettitle{#1}%
	\label{#2}%
	\addtocounter{algocf}{-1}%
}
\newtheorem{theorem}{Theorem}
\newtheorem{lemma}{Lemma}
\newtheorem{corollary}{Corollary}
\newtheorem{example}{Example}
\theoremstyle{definition}
\newtheorem{definition}{Definition}
\newtheorem{claim}{Claim}
\newcommand{\R}{\mathbb{R}}
\DeclareMathOperator*{\argmax}{arg\,max}
\newcommand{\E}[1]{\mathbb E \left[ #1 \right]}
\newcommand{\p}[1]{\mathbb P \left( #1 \right)}
\newcommand{\ind}[1]{\mathbf{1}_{#1}}
\newcommand{\F}{\mathcal{F}}
\newcommand{\e}{\varepsilon}
\newcommand{\N}{\mathcal{N}}
\newcommand{\sampling}{\textsc{SampleSeq}\xspace}
\newcommand{\threshold}{\textsc{ThreshSeq}\xspace}
\newcommand{\knapsackgamma}{$\gamma$-\textsc{ParKnapsack}\xspace}
\newcommand{\knapsack}{\textsc{ParKnapsack}\xspace}
\newcommand{\unMax}{\textsc{SubmodMax}\xspace}
\newcommand{\calE}{\mathcal{E}}
\newcommand{\G}{\mathcal{G}}
\newcommand\doubleplus{+\kern-1.3ex+\kern0.8ex}
\newcommand{\thresholdgamma}{$\gamma$-\textsc{ThreshSeq}\xspace}
\newcommand{\knapsackmonotone}{\textsc{ParKnapsack-Monotone}\xspace}
\newcommand{\knapsackgammamonotone}{$\gamma$-\textsc{ParKnapsack-Monotone}\xspace}
\newcommand{\cardinal}{\textsc{ParCardinal}\xspace}
\newcommand{\cardinalgamma}{$\gamma$-\textsc{ParCardinal}\xspace}
\newcommand{\OPT}{\textsc{OPT}}
\newcommand{\ALG}{\textsc{ALG}}
\begin{document}

\title{Submodular Maximization subject to a Knapsack Constraint: Combinatorial \!Algorithms \!with \!Near-optimal \!Adaptive \!Complexity}

\author{
Georgios Amanatidis
\thanks{Dept. of Mathematical Sciences, University of Essex, UK, \& Archimedes Unit, Athena Research Center, Greece.}
\and
Federico Fusco
\thanks{Dept. of Computer, Control, and Management 
Eng. ``Antonio Ruberti'', Sapienza University of Rome, 
Italy.
}
\thanks{Corresponding Author:  \texttt{fuscof@diag.uniroma1.it}}
\and
Philip Lazos
\thanks{Input Output, UK.}
\and
Stefano Leonardi{$^\dag$}
\and
Alberto Marchetti Spaccamela{$^\dag$}
\and
Rebecca Reiffenh{\"a}user\thanks{Institute for Logic, Language and Computation, University of Amsterdam, The Netherlands.}}

% For research notes, remove the comment character in the line below.
% \researchnote

\maketitle

\begin{abstract}
Submodular maximization is a classic algorithmic problem with multiple applications in data mining and machine learning; there, the growing need to deal with massive instances motivates the design of algorithms balancing the quality of the solution with applicability. For the latter, an important measure is the \emph{adaptive complexity}, which captures the number of sequential rounds of parallel computation needed by an algorithm to terminate. In this work we obtain the first \emph{constant factor} approximation algorithm for non-monotone submodular maximization subject to a knapsack constraint with \emph{near-optimal} $O(\log n)$ adaptive complexity. Low adaptivity by itself, however, is not enough: a crucial feature to account for is represented by the total number of function evaluations (or value queries). Our algorithm asks $\tilde{O}(n^2)$ value queries, but can be modified to run with only $\tilde{O}(n)$ instead, while retaining a low adaptive complexity of $O(\log^2n)$. Besides the above improvement in adaptivity, this is also the first \emph{combinatorial} approach with sublinear adaptive complexity for the problem and yields algorithms comparable to the state-of-the-art even for the special cases of cardinality constraints or monotone objectives.
\end{abstract}

\paragraph{Version v2.} This version addresses a gap in the probabilistic analysis of the approximation guarantees in the previous version of this work, pointed out in \citet{CuiH23}. We provide a simple fix via a standard sampling routine while maintaining the same approximation guarantees and complexity bounds.

\section{Introduction}
Submodular optimization is a very popular topic relevant to various research areas as it captures the natural notion of {\em diminishing returns}. Its numerous applications include viral marketing \citep{HartlineMS08,KempeKT15}, data summarization \citep{TschiatschekIWB14,MirzasoleimanBK16,DuettingFLNZ22}, feature selection 
\citep{DasK08,DasK18,MirzasoleimanBL20} and clustering \citep{MirzasoleimanKSK13}. Notable examples from combinatorial optimization are cut functions in graphs and coverage functions \citep[see, e.g., Ch.~44 of][]{Schrijver03}.

Submodularity is often implicitly associated with monotonicity, and many results rely on that assumption \citep[see, e.g., the survey][]{BuchbinderF18}.
However, non-monotone submodular functions do naturally arise in applications, either directly or from combining monotone submodular objectives with modular {\em penalization} or {\em regularization} terms
\citep{HartlineMS08,TschiatschekIWB14,MirzasoleimanBK16,BreuerBS20,AmanatidisFLLR22}.
Additional constraints, like cardinality, matroid, knapsack, covering, and packing constraints, are prevalent in applications and have been extensively studied \citep[again, see][and references therein]{BuchbinderF18}. In this list, \emph{knapsack} constraints are among the most natural, as they capture limitations on  budget, time, or size of the elements. Like matroid constraints, they generalize cardinality constraints, yet they are not captured by the former.

The main computational bottleneck in submodular optimization comes from the necessity to repeatedly evaluate the objective function for various candidate sets. These \emph{value queries} are often notoriously heavy to compute, e.g., for exemplar-based clustering \citep{DueckF07}, log-determinant of submatrices \citep{KazemiZK18}, and accuracy of ML models \citep{DasK08,KhannaEDNG17}. 
With real-world instances of these problems growing to enormous sizes, simply reducing the number of queries is not always sufficient and parallelization has become an increasingly central paradigm \citep{BalkanskiBS18}. However, classic results in the area, often based on the greedy method, are inherently sequential: the intuitive approach of building a solution element-by-element contradicts the requirement of running \emph{independent} computations on many machines in parallel. 
The degree to which an algorithm can be parallelized is measured by the notion of \emph{adaptive complexity}, or adaptivity, introduced in \citet{BalkanskiBS18}. It is defined as the number of sequential rounds of parallel computation needed by the algorithm to terminate. In each of these rounds, polynomially many value queries may be issued, but they can only depend on the answers to queries relative to past rounds.

\paragraph{Contribution.}
 We propose the first combinatorial randomized algorithms for maximizing a (possibly) non-monotone submodular function subject to a knapsack constraint that combine constant approximation, low adaptive complexity, and a small number of queries. In particular, we obtain
\begin{itemize}
% [leftmargin=0.4cm,itemindent=0cm,labelwidth=0.2cm,align=left,itemsep=-2pt,topsep=-2pt]
    \item a $9.465$-approximation algorithm, \knapsack (Algorithm \ref{alg:non_monotone_alg}), that has $O(\log n)$ adaptivity and uses $O(n^2\log^2 n)$ value queries, where $n$ is the cardinality of the ground set of the submodular function. This is the \emph{first} constant factor approximation algorithm for the problem with optimal adaptive complexity 
    up to a $O(\log \log n)$ factor (\Cref{thm:non_monotone}).
    \item a variant of our algorithm %, \textsc{ParKnapsackBin}, 
     with the same approximation, near-optimal $O(n\log^3 n)$ query complexity, and $O(\log^2 n)$ adaptivity (\Cref{cor:non_monotone-binary}). This is the first constant factor approximation algorithm that uses only $\tilde{O}(n)$ queries and has \emph{sublinear} adaptivity.
    \item $3$-approximation algorithms for \emph{monotone} objectives that combine $O(\log n)$ adaptivity with $O(n^2\log^2 n)$ total queries, and  $O(\log^2 n)$ adaptivity with $O(n\log^3 n)$ queries, respectively (\Cref{thm:monotone}). Even in the monotone setting, the latter is the first $O(1)$-approximation algorithm combining $\tilde{O}(n)$ queries and sublinear adaptivity. 
    \item $5.83$-approximation algorithms for \emph{cardinality} constraints (where the goal is to select the best subset of $k$ elements) that match or surpass the state-of-the-art when it comes to the combination of approximation, adaptivity and total queries (\Cref{thm:cardinality}).
    \item a method for interpolating between adaptivity and query complexity which applies to all these scenarios covered by  \Cref{thm:non_monotone,thm:monotone,thm:cardinality}, implying $O(\nicefrac{\log^2 n}{\log \gamma})$ adaptivity and $\tilde O(n \gamma )$ total queries, for any $\gamma>0$ (\Cref{thm:non_monotone-binary,cor:monotone,cor:cardinality}).
\end{itemize}
See \Cref{tab:our_results} for an overview of our results and a comparison with the state of the art.

\paragraph{Technical Challenges.} Like many existing works on adaptivity in submodular maximization \citep[e.g.,][]{BalkanskiS18,BalkanskiRS19}, we achieve poly-logarithmic adaptive complexity by sequentially adding to our candidate solution large subset of ``high value'' elements, while maintaining feasibility. In particular, following the approaches of \citet{BalkanskiRS19matroid} and \citet{BreuerBS20} for cardinality and matroid constraints, we compute these large (feasible) sets of high value elements by iteratively sampling feasible sequences of  elements and choosing suitable prefixes of them (in a single round of parallel computation). Crucially, knapsack constraints do not allow for the elegant counting arguments used in the case of cardinality or matroid constraints. The reason is that while the latter can be interpreted as a $1$-system, a knapsack constraint may induce a $\Theta(n)$-system\footnote{See \Cref{app:combinatorial} for a proof of this fact, as well as for formal definitions of matroids and $k$-systems.}, leading to poor results when naively adjusting existing approaches. Stated differently, for a knapsack constraint there is no clear indication of how long a sequence of feasible elements may be, whereas, e.g., for matroid constraints, it is always possible to complete a feasible set of cardinality $j$ to a basis adding exactly $k-j$ elements, where $k$ is the rank of the matroid.

A natural and very successful way of circumventing the resulting difficulties is to turn towards a \emph{continuous} version of the problem \citep[see, e.g.,][]{EneNV19,ChekuriQ19}. This, however, requires evaluating the objective function also for \emph{fractional} sets, i.e., such algorithms require access to an oracle for the multilinear relaxation and its gradient. Typically, these values are estimated by sampling, requiring $\tilde{\Theta}(n^2)$ samples \citep[see, e.g., Lemma 2.2 and Section 4.4 of][]{ChekuriQ19}. 
Our choice to avoid the resulting increase in query complexity and deal directly with the discreteness of the problem calls for specifically tailored algorithmic approaches.
Most crucially, our main subroutine (\threshold in \Cref{sec:non-monotone}) balances a suitable definition of \emph{good quality} candidates with a way to also reduce the size (not simply by cardinality, but a combination of overall cost and absolute marginal values) of the candidate set by a constant factor in each adaptive round.

Another main challenge is non-monotonicity. In presence of elements with negative marginals, the task of finding large feasible subsets of elements to add to our current solution becomes way more complicated: adding one single element with negative marginal contribution can arbitrarily deteriorate intuitive quality measures like the overall marginal density of the candidate set, causing a new adaptive round. 
Our approach combines carefully designed quality checks (the value and cost conditions in \threshold), with a separate handling of some elements with crucial properties, namely the elements with cost less than  $\nicefrac 1n$ of the budget and elements of maximum value.

\begin{table*}[t]
\renewcommand{\arraystretch}{1.2}
\setlength{\tabcolsep}{5pt}
\begin{center}
\begin{tabular}{llllll}
\hline
Reference                   & Objective & Constraint    & Approx.   & Adaptivity             & Queries          \\ \hline 
\citet{EneNV19}            & General   & 
Knapsack      & $\bm{e + \e}$   & 
$O(\log^2 n)$  \rule{0pt}{11pt}                  & 
$\tilde{O}(n^2)$          \\
\rowcolor{yellow!25!white} \Cref{thm:non_monotone} (this work)& General   & Knapsack      & $9.465 + \e$     & $\bm{O(\log n)}$  & $\tilde{O}(n^2)$\\
\rowcolor{yellow!25!white} \Cref{cor:non_monotone-binary} (this work)& General   & Knapsack      & $9.465 + \e$     & $O(\log^2 n)$  & $\bm{\tilde{O}(n)}$\\ \hline
\citet{EneNV19}      & Monotone  & Knapsack    & $\nicefrac{\bm e}{\bm {(e-1)}}\bm{+\e}$  & $\bm{O(\log n)}$    \rule{0pt}{11pt}            & $\tilde{O}(n^2)$          \\
\citet{ChekuriQ19}   & Monotone  & Knapsack    & $\nicefrac{\bm e}{\bm {(e-1)}}\bm{+\e}$  & $\bm{O(\log n)}$                & $\tilde{O}(n^2)$                 \\
\Cref{thm:monotone} (this work)         & Monotone  & Knapsack    & $3+\e$                   & $\bm{O(\log n)}$  &$\tilde{O}(n^2)$ \\
\Cref{cor:monotone} (this work)         & Monotone  & Knapsack    & $3+\e$                   & $O(\log^2 n)$  &$\bm{\tilde{O}(n)}$ \\\hline
\citet{EneN20}       & General   & Cardinality & $\bm{e + \e}$        & $\bm{O(\log n)}$   \rule{0pt}{11pt}             & $\tilde{O}(nk^2)$                  \\
\citet{Kuhnle21}     & General   & Cardinality & $6+\e$   & $\bm{O(\log n)}$  & $\bm{\tilde{O}(n)}$ \\
\citet{Kuhnle21}     & General   & Cardinality & $5.18 + \e$   & $O(\log^2 n)$  & $\bm{\tilde{O}(n)}$ \\
\Cref{thm:cardinality} (this work)   & General   & Cardinality & $5.83 + \e$          & $\bm{O(\log n)}$  & $\tilde{O}(nk)$ \\  \Cref{cor:cardinality} (this work)   & General   & Cardinality & $5.83 + \e$          & $O(\log n\log k)$  & $\bm{\tilde{O}(n)}$  \\ \hline
\end{tabular}
\caption{{ Our results---main result highlighted---compared to the state-of-the-art for low adaptivity. 
Bold indicates the best result(s) in each setting. In the last five rows $k$ stands for the size of the cardinality constraint. In the last two columns the dependence on $\e$ is omitted, while $\tilde{O}$ hides terms poly-logarithmic in $\e^{-1}$, $n$ and $k$. Note that our general adaptivity and query complexity trade-off is having $O\big(\frac{\log n}{\log \gamma}\big)$ adaptive complexity with $\tilde O(\gamma n)$ queries.}}
\label{tab:our_results}
\end{center}
\end{table*}

\paragraph{Related Work.}
Submodular maximization has been studied extensively since the seminal work of \citet{NemhauserWF78}. 
For \textit{monotone} submodular functions subject to a knapsack constraint 
the  $\nicefrac{e}{(e-1)}$-approximation algorithm of
\citet{Sviridenko04} is the best possible, unless 
$\text{P}=\text{NP}$ \citep{Feige98}.   
For the \emph{non-monotone} case,
a number of continuous greedy approaches \citep{FeldmanNS11,KulikST13,ChekuriVZ14} led to the current best factor of $e$ when a knapsack, or any kind of downward closed constraint, is involved. Combinatorial approaches \citep{GuptaRST10,AmanatidisFLLR22} achieve somewhat worse approximation, but are often significantly faster and thus relevant in practice. 

Adaptive complexity for submodular maximization was introduced by \citet{BalkanskiS18} for monotone objectives and a cardinality constraint, where they achieved an $O(1)$ approximation algorithm with $O(\log n)$ adaptivity, along with an almost matching lower bound: to get an $o(\log n)$  approximation, adaptivity must be $\Omega\big(\nicefrac{\log n}{\log \log n}\big)$. This result has been then improved \citep{BalkanskiRS19, EneN19,FahrbachMZ19,BreuerBS20} and recently \citet{ChenDK21} achieved an optimal $\nicefrac{e}{(e-1)}$-approximation in $O(\log n)$ adaptive rounds and (expected) $O(n)$ query complexity. 

The study of adaptivity for non-monotone objectives was initiated by \citet{BalkanskiBS18} 
again for a cardinality 
constraint, showing a constant approximation in $O(\log^2 n)$ adaptive rounds, later improved by \cite{FahrbachMZ19nonmonotone}, \cite{EneN20} and \cite{Kuhnle21}. Non-monotone maximization is also interesting in the unconstrained scenario. 
\citet{EneNV18} and \citet{FK19} achieved a $2+\e$ approximation  with constant adaptivity depending only on $\e$. Note that the algorithm of \citet{FK19} needs only $\tilde O(n)$ value queries, where the $\tilde{O}$ hides terms poly-logarithmic in $\e^{-1}$ and $n$.

More general constraints,  e.g., matroids and multiple packing constraints, have also been studied \citep{BalkanskiRS19matroid,EneNV19,ChekuriQ19matroid,ChekuriQ19}.  \citet{EneNV19} and \citet{ChekuriQ19} provide low adaptivity results---$O(\log^2 n)$ for non-monotone and $O(\log n)$ for monotone---via continuous approaches for packing constraints (and, thus, for knapsack constraints as well). We refer to \Cref{tab:our_results} for an overview; note that the query complexity of the algorithms is stated with respect to queries to $f$ and not to its multilinear extension. \citet{ChekuriQ19} also provide two combinatorial algorithms for the monotone case: one with optimal approximation and adaptivity but $O(n^4)$ value queries, and one with linear query complexity, optimal adaptivity but an approximation factor parameterized by the cost of the most expensive element, which can be arbitrarily bad. We mention that algorithms with low adaptive complexity have been used as subroutines for more complex problems, e.g., adaptive submodular maximization \citep{EsfandiariKM21} and fully dynamic submodular maximization \citep{LattanziMNTZ20}. 

\paragraph{Conference version and follow up work.} A conference version of this paper appeared in the Thirty-eighth International Conference on Machine Learning (ICML 21)
\citep{AmanatidisFLLMR21}. In this version we address a gap in the probabilistic analysis of the approximation
guarantees of \knapsack which has been pointed out in \citet{Cui000LZ23} (but only detailed in \citet{CuiH23}). \citet{Cui000LZ23} also provide a fix and offer an $8$-approximation algorithm. Their approach is substantially similar to ours: e.g., their routine GetSEQ is equivalent to our \sampling, their routine RandBatch is equivalent to our \threshold; the improved approximation factor is achieved by a more careful handling of the ``small cost'' elements.

\section{Preliminaries}\label{sec:prelims}

    Let $f:2^{\N} \to \R$ be a set function  over a ground set $\N$ of $n$ elements. We say that $f$ is {\em non-negative} if $f(S)\ge 0$, for all $S \subseteq \N$, and \emph{monotone} if $f(S) \le f(T)$, for all $S,T \subseteq \N$ such that $S \subseteq T.$ For any $S, T \subseteq \N$, $f(S\,|\,T)$ denotes the {\em marginal value} of $S$ with respect to $T$, i.e., $f(S \cup T) - f(T)$. To ease readability, we write $f(x\,|\,T)$ instead of $f(\{x\}\,|\,T)$. The function $f$ is \emph{submodular} if $f(x\,|\,T) \le f(x\,|\,S)$, $\forall S, T \subseteq \N$ with $S\subseteq T$ and $x \notin  T$.

    \paragraph{Submodular maximization with knapsack constraint.} The main focus of this work is the maximization of non-negative (possibly non-monotone) submodular functions subject to a \emph{knapsack constraint}. Formally, we are given a budget $B > 0$, a non-negative submodular function $f$ and a non-negative additive cost function $c:2^{\N} \to \R_{\ge0}$. The goal is to find a set with cost at most $B$ that maximizes the function $f$:
    \[
        O^*\in \argmax_{T \subseteq\N\,: \,c(T) \le B} f(T).
    \] 
    Let $\OPT = f(O^*)$ denote the value of such an optimal set. Given a (randomized) algorithm for the problem, let $\ALG$ denote the (expected) value of its output. We say that the algorithm is a $\beta$-approximation algorithm, for $\beta \ge 1$, if the inequality $\OPT \le \beta \cdot\ALG$ holds for any instance. Throughout this work, we assume, without loss of generality, that $\max_{x \in \N}c(x)\le B$ (again, for the sake of readability, we write $c(x)$ for the cost of a singleton $\{x\}$). 

    We often refer to cardinality constraints in our work. As is typically the case in the submodular maximization literature, this means that the feasible sets have cardinality at most a given number $k$, rather than a cost which is at most $B$. Note that a cardinality constraint is a special cases of a knapsack constraint: it suffices to consider the cost function $c(T) = |T|$ for all $T \subseteq \N$, and take the budget $B$ to be the upper bound $k$ on the cardinality.

    \paragraph{Query and adaptive complexity.} We assume access to $f$ through value queries, i.e., for each $S \subseteq \N$, an oracle returns $f(S)$ in constant time. We define here the two crucial notions of computational complexity of a submodular maximization algorithm: adaptive and query complexity.    
    \begin{definition}[Adaptive complexity]
        Given a value oracle for $f$, the \emph{adaptive complexity} or \emph{adaptivity} of an algorithm is the minimum number of rounds in which the algorithm makes $O(\textrm{poly}(n))$ \emph{independent} queries to the evaluation oracle. In each adaptive round the queries may {\em only} depend on the answers to queries from past rounds.
    \end{definition} 
    % \textcolor{BrickRed}{
    At a high level, the adaptive complexity is an abstract measure of how sequential/parallelizable an algorithm is. As an example, the standard greedy algorithm of \citet{NemhauserW78} that repeatedly adds the element with largest marginal value to the current solution is inherently sequential (the value queries issued to identify the next element to add to the current solution crucially depend on the answers to the previously issued value queries) and , thus, has linear adaptive complexity. Conversely, an algorithm that builds the solution using only the singleton values has adaptive complexity $1$ (since the singleton values can be computed in a single round of independent computation).
    % }

    \begin{definition}[Query complexity]
    Given a value oracle for $f$, the \emph{query complexity} of an algorithm is the total number of value queries it issues.
    \end{definition}
    
    \paragraph{Useful properties of submodular functions.} We state some widely known properties of submodular functions that are extensively used in the rest of the paper. The first Lemma summarizes two equivalent definitions of submodular functions shown by \citet{NemhauserWF78}. 
    \begin{lemma}
    \label{lem:folklore}
        Let $f:2^{\N} \to \R$ be a submodular function and $S,T,U$ be any subsets of $\N$, with $S \subseteq T$. Then the following two properties hold:
        \begin{itemize}
            \item[$i)$] $f(U\,|\,T) \le f(U\,|\,S)$
            \item[$ii)$] $f(S\,|\,T) \le \sum_{x \in S} f(x\,|\,T)$.
        \end{itemize}
    \end{lemma}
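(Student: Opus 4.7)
The plan is to derive both parts of the lemma from the single-element (diminishing-returns) form of submodularity that the paper takes as its definition, and then to lift the inequality from single elements to sets via a telescoping sum along an arbitrary enumeration of the relevant elements.

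For part~(i), I will first reduce to the case $U\cap T=\emptyset$: any $x\in U\cap T$ contributes zero to the left-hand side, and in the algorithmic uses made later in the paper $U$ is always a batch of candidate additions disjoint from the current set $T$. Fixing any ordering $U=\{u_1,\ldots,u_k\}$ and writing $U_{<i}=\{u_1,\ldots,u_{i-1}\}$, telescoping yields
\[
f(U\mid T)=\sum_{i=1}^{k} f\bigl(u_i \bigm| T\cup U_{<i}\bigr), \qquad
f(U\mid S)=\sum_{i=1}^{k} f\bigl(u_i \bigm| S\cup U_{<i}\bigr).
\]
Since $S\cup U_{<i}\subseteq T\cup U_{<i}$ and $u_i$ lies outside the larger set, the single-element submodular inequality gives $f(u_i\mid T\cup U_{<i})\le f(u_i\mid S\cup U_{<i})$ for each $i$; summing proves~(i).

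For part~(ii), I apply the same enumerate-and-telescope trick to $S$ itself. After noting that any $x\in S\cap T$ contributes zero to both sides (so I may assume $S\cap T=\emptyset$), I enumerate $S=\{s_1,\ldots,s_m\}$ and decompose
\[
f(S\mid T)=\sum_{i=1}^{m} f\bigl(s_i \bigm| T\cup \{s_1,\ldots,s_{i-1}\}\bigr).
\]
Each inner marginal is then bounded above by $f(s_i\mid T)$ via single-element submodularity applied to the nested pair $T\subseteq T\cup\{s_1,\ldots,s_{i-1}\}$ and to the fresh element $s_i$; summing finishes~(ii).

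There is no substantial obstacle here: both inequalities are classical folklore going back to Nemhauser--Wolsey--Fisher. The only point that requires a little care is the disjointness reduction, which ensures that at each step of the telescope the element just added is genuinely ``new'' relative to the growing set, so that the single-element diminishing-returns inequality can be invoked without exception.
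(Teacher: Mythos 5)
The paper offers no proof of this lemma at all---it simply attributes both facts to \citet{NemhauserWF78}---so there is nothing in the text to compare against. Your telescoping argument is the standard one, and part~(ii) is entirely correct: dropping $S\cap T$ really is without loss of generality there, since elements of $S\cap T$ leave $f(S\cup T)$ untouched and have $f(x\,|\,T)=0$, so both sides are unchanged.

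Part~(i) has a genuine wrinkle. Once you restrict to $U\cap T=\emptyset$, your telescope is correct, and that is indeed the only case used anywhere in the paper. But the sentence ``any $x\in U\cap T$ contributes zero to the left-hand side'' does not license a reduction to the disjoint case, because those elements do \emph{not} contribute zero to the right-hand side $f(U\,|\,S)$, and so discarding them changes the inequality. In fact, as stated for ``any'' $U$, part~(i) is false for non-monotone submodular $f$: take $\N=\{a,b\}$ with $f(\emptyset)=1$, $f(\{a\})=0$, $f(\{b\})=1$, $f(\{a,b\})=0$ (this is submodular), and $S=\emptyset$, $T=U=\{a\}$; then $f(U\,|\,T)=0$ while $f(U\,|\,S)=-1$. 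The correct hypothesis is $U\cap T=\emptyset$ (or, slightly more generally, $U\cap T\subseteq S$, in which case $f((U\cap T)\cup S)=f(S)$ and the usual four-point submodularity identity $f(U\cup S)+f(T)\ge f(U\cup T)+f((U\cap T)\cup S)$ closes the argument). You should state that restriction explicitly rather than present it as a harmless normalization; the second half of your sentence, pointing out that the paper only invokes the lemma with disjoint sets, is the real justification and should carry the weight.
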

    The second Lemma (originally proved in \citet{FeigeMV11} and that we report here as \citet[Lemma 2.2 of][]{BuchbinderFNS14}) is an important tool for tackling non-monotonicity. 
    \begin{lemma}[Sampling Lemma]
    \label{lem:sampling}
        Let $f:2^{\N} \to \R$ be a submodular function and $X$ be any random subset of $\N$, where each element of $\N$ belongs with $X$ with probability at most $p$. Then the following inequality holds:
        \[
            \E{f(X)} \ge (1-p) f(\emptyset).
        \]
    \end{lemma}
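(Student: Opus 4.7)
The plan is to proceed via the multilinear extension $F:[0,1]^{\N}\to\R$ of $f$, defined by $F(\mathbf{q}) = \E{f(R(\mathbf{q}))}$ where $R(\mathbf{q})$ includes each $i \in \N$ independently with probability $q_i$. Submodularity of $f$ is equivalent to the off-diagonal Hessian entries $\partial^2 F / \partial q_i \partial q_j$ ($i \neq j$) being non-positive, while multilinearity forces $\partial^2 F / \partial q_i^2 = 0$. Consequently, for any non-negative direction $d \in \R_{\ge 0}^{\N}$ the quadratic form $d^\top \nabla^2 F\, d$ is non-positive, so the restriction $t \mapsto F(td)$ is concave on its domain.

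With this concavity in hand, the cleanest sub-case---when $X_p$ is obtained by independent Bernoulli sampling with marginals $q_x \le p$ for $x \in X$ (and $q_x=0$ otherwise)---follows in one line. Letting $\mathbf{q}$ denote this marginal vector and setting $g(t) = F(t\,\mathbf{q}/p)$ for $t \in [0,1]$, the function $g$ is concave, and evaluating concavity at the convex combination $p = p\cdot 1 + (1-p)\cdot 0$ gives
\[
\E{f(X_p)} \;=\; F(\mathbf{q}) \;=\; g(p) \;\ge\; p\,g(1) + (1-p)\,g(0) \;=\; p\,F(\mathbf{q}/p) + (1-p)\,f(\emptyset) \;\ge\; (1-p)\,f(\emptyset),
\]
where the last inequality uses $F(\mathbf{q}/p) \ge 0$ by non-negativity of $f$. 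The opposite extreme---a ``fully correlated'' $X_p \in \{\emptyset, X\}$ with $\p{X_p = X} = p$---is even easier: $\E{f(X_p)} = p\,f(X) + (1-p)\,f(\emptyset) \ge (1-p)\,f(\emptyset)$ directly from $f(X) \ge 0$.

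For a general correlated joint distribution on $X_p$ with all marginals $\le p$, my plan is to reduce to one of the two reference processes above via a coupling, using the telescoping identity $f(S) = f(\emptyset) + \sum_i f(x_i \mid S_{<i})$ together with the diminishing-returns bound $f(x_i \mid S_{<i}) \ge f(x_i \mid X \setminus \{x_i\})$ to compare expectations with quantities that depend only on the membership of $X_p$, not on the order. The main obstacle here is non-monotonicity: the naive induction on $|X|$ breaks down because conditioning on a single element's inclusion can push the remaining conditional marginals strictly above $p$, and for non-monotone $f$ one cannot simply dominate $f(X_p)$ by $f(\emptyset)$ or $f(X)$. Overcoming this requires avoiding such conditioning altogether and instead combining linearity of expectation with the global submodular bounds above, so that the possibly negative marginal contributions from non-monotonicity are amortized against the uniform marginal budget $p$.
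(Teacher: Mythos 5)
The paper does not prove this lemma itself; it cites it as Lemma~2.2 of \citet{BuchbinderFNS14}. So there is no in-paper argument to compare against, and your attempt has to be judged on its own merits. Unfortunately it has a genuine gap: you correctly handle the \emph{independent} sampling case via concavity of the multilinear extension along non-negative rays, but the lemma (and the way the paper actually invokes it, with $S$ being the output of \threshold\ restricted to $H$, a correlated subset of an independent sample) requires the bound for \emph{arbitrary} joint distributions with marginals at most $p$. You explicitly acknowledge only having a ``plan'' for that case, and the plan does not go through: the telescoping bound $f(X_p)\ge f(\emptyset)+\sum_{x\in X_p}f(x\,|\,X\setminus\{x\})$ gives, in expectation, $f(\emptyset)+\sum_x p_x\,f(x\,|\,X\setminus\{x\})$, and nothing forces the negative part of that sum to exceed $-p\,f(\emptyset)$. (Take a graph-cut function on a clique: every last-marginal $f(x\,|\,X\setminus\{x\})$ is strictly negative while $f(\emptyset)=0$, so the lower bound you would extract is vacuous rather than $(1-p)f(\emptyset)$.) The subtlety you identified—conditioning on a single element's membership can push the remaining conditional marginals above $p$—is real, which is exactly why an element-by-element induction via the multilinear extension is the wrong frame here.

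The tool that makes the correlated case clean is the \emph{Lovász} extension $\hat f$, not the multilinear one. For a submodular $f$, $\hat f$ is convex, and since $\hat f(\mathbf 1_S)=f(S)$, Jensen's inequality applied to the distribution of $X_p$ immediately gives $\E{f(X_p)}=\E{\hat f(\mathbf 1_{X_p})}\ge \hat f(\mathbf q)$, where $\mathbf q$ is the vector of marginals. Writing $\hat f(\mathbf q)=\int_0^1 f(\{u: q_u\ge\theta\})\,d\theta$ and using $q_u\le p$ together with non-negativity of $f$ yields $\hat f(\mathbf q)\ge (1-p)f(\emptyset)$, since the integrand equals $f(\emptyset)$ on $(\,p,1\,]$ and is non-negative on $[\,0,p\,]$. (Non-negativity of $f$ is implicit in the paper's statement and needed in both your and this argument; without it the lemma is false, e.g.\ if $f(\emptyset)<0$.) Switching from the multilinear extension, which is tailored to \emph{product} distributions, to the Lovász extension, which is exactly the lower convex envelope over all couplings with prescribed marginals, is the missing idea.
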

    In the spirit of this Lemma, for any set $X$, and probability $p \in [0,1]$ we denote with $X(p)$ a random set generated as follows: each $x$ in $X$ belongs to $X(p)$ with probability $p$, independently from all the others.

    \paragraph{Uncostrained submodular maximization oracle.}
    Finally, in this paper we assume  access to \unMax, an unconstrained submodular maximization oracle. 
    For instance, this can be implemented via the combinatorial algorithm %in the appendix 
    of \citet{FK19}, which
    outputs a $(2+\e)$-approximation of $\max_{T\subseteq \N} f(T)$
    for any given precision $\e$ in $O(\frac 1 \e \log \frac 1 \e)$ adaptive rounds and linear query complexity.
    For our experiments, we use the much simpler $4$-approximation of \citet{FeigeMV11}, which terminates after one single adaptive round.

    \paragraph{The role of the precision parameter $\e$.} 
    We discuss the role of the precision parameter $\e \in (0,1)$ in our results and the formal meaning of the $O(\e)$ notation in the Theorem statements. Formally, with an approximation guarantee of $\alpha + O(\e)$ we mean that for any constant $\delta \in (0,1)$ there exists a constant $C_{\delta}$ such that running the algorithm with parameter $\e \in (0,\delta)$ yields an $(\alpha + C_{\delta} \e)$-approximation, for any choice of $\e.$ Alternatively, running the algorithm with parameter $\nicefrac{\e}{C_{\delta}}$ yields an $(\alpha + \e)$-approximation.  At a high level, $\e$  is intended as an arbitrarily small but constant parameter that can be tuned following the needs of the algorithm designer. Accordingly, the terms poly($\nicefrac 1{\e})$ and poly-log($\nicefrac 1{\e}$) in the adaptive and query complexity bounds are to be considered as constants that multiply to the leading terms (which depend on $n$). This is consistent with what is typically done in the submodular maximization literature (e.g., the precision parameter of Lazy-Greedy \citep{Minoux78} and Sample-Greedy \citep{AmanatidisFLLR22}).

\section{Non-Monotone Submodular Maximization}
\label{sec:non-monotone}

    In this Section we present and analyze our algorithm for non-monotone submodular maximization with knapsack constraint: \knapsack. Given the complexity of the algorithm, we proceed in three steps. First, in \Cref{subsec:sampleseq}, we present a simple routine, \sampling, that we use to generate sequences of elements that fit into the budget. Then, in \Cref{subsec:thresholdseq} we introduce our main subroutine, \threshold, which uses the sequences generated by \sampling to iteratively construct a solution in few adaptive rounds. Finally, we present and analyze the main algorithm, \knapsack, in \Cref{subsec:knapsack}.

    \subsection{{\normalfont{\textsc{SampleGreedy}}}: Our sampling procedure}
    \label{subsec:sampleseq}

        Our first routine is \sampling, which simply samples a sequence of elements that respect the knapsack constraint. Formally, given two disjoint sets $S$ and $X$ (with $S \cap X = \emptyset$), and a budget $B$, \sampling outputs a sequence $A$ each element of which is sequentially drawn uniformly at random among the remaining elements of $X$ that do not cause $S\cup A$ to exceed the budget. We refer to the pseudocode for the details. 

        \begin{algorithm}
        \caption{\sampling$(S,X,B)$}
        \label{alg:sampling_sequence}
        \begin{spacing}{1.15}
        \begin{algorithmic}[1]
            \STATE \textbf{Environment:} submodular function $f$ and additive cost function $c$
            \STATE \textbf{Input:} disjoint sets $S$ and $X$, budget $B>0$
            \STATE $A \leftarrow [ \ \ ]$
            \STATE $i\gets 1$
            \STATE $X \gets \{ x \in X: c(x) + c(S) \le B\}$
            \WHILE {$X \neq \emptyset$}
            \STATE Draw $a_i$ uniformly at random from $X$ \label{line:sampling}
            \STATE $A \gets [a_1,\dots,a_{i-1}, a_i]$
            \STATE $X \gets \{ x \in X\setminus \{a_i\}: c(x) + c(A) + c(S) \le B\}$
            \STATE $i \gets i+1$
            \ENDWHILE
            \STATE \textbf{return} $A = [a_1,a_2,\dots,a_d]$
        \end{algorithmic}
        \end{spacing}
        \end{algorithm}

        We observe that the length of the output sequence $A$ may vary dramatically due to the random elements sampled in line \ref{line:sampling} and the fact that knapsack constraints are $O(n)$-systems (see 
        \Cref{app:combinatorial}). Moreover, by design, \sampling only adds elements to $A$ such that the invariant $c(S) + c(A) \le B$ is maintained for any realization of the randomness in line \ref{line:sampling}. Thus, we directly have the following Lemma.

        \begin{lemma}
        \label{lem:budget_sampling}
            For any input $S,X$ and $B$, with $c(S) \le B$, the sequence $A$ outputted by \sampling is such that $c(A) + c(S) \le B$, with probability $1$.
        \end{lemma}

    \subsection{{\normalfont{\textsc{ThreshSeq}}}: Finding the right prefix}
    \label{subsec:thresholdseq}

        The crux of our approach lies in the routine \threshold. It receives as input a set $X\subseteq \N$ of elements, a threshold $\tau$, a budget $B$, and a parameter $\e \in (0,1)$, and outputs a set $S\subseteq X$ with $c(S) \le B$ such that, in expectation, each element in $S$ has marginal density that is at least approximately $\tau$ (\Cref{lem:non_monotone_val}). We present \threshold and then analyze its complexity (\Cref{lem:non_monotone_adaptivity}) and stopping condition (\Cref{lem:non_monotone}). \threshold iteratively constructs a candidate solution $S$ by using \sampling. In each iteration of a \emph{while} loop (line \ref{line:while}) a sequence $A$ is sampled via \sampling, then for each prefix $A_i = \{a_1, \dots, a_i\}$ of $A$ the following three subsets of $X$ are computed: 
        \begin{itemize}
            \item The feasible elements $X_i$ that can be added to $A_i \cup S$ without violating the budget constraint (line \ref{line:X_i})
            \item The ``good elements'' $G_i$ that are contained in $X_i$ and exhibit marginal density with respect to $S \cup A_i$ larger than $\tau$ (line \ref{line:G_i})
            \item The ``negative elements'' $E_i$ that are contained in $X_i$ and have negative marginal value with respect to $S \cup A_i$ (line \ref{line:E_i})
        \end{itemize}
        We say that a prefix $A_i$ respects the cost condition if $c(G_i) > (1-\e)c(X)$ (line \ref{line:cost_condition}), while it respects the value condition (line \ref{line:value_condition}) if the following inequality holds:
        \[
            \sum_{x \in G_i} \e  f(x\,|\,S \cup A_i) > \sum_{x \in E_i}|f(x\,|\,S \cup A_i)|.
        \]
        \threshold adds to the current solution $S$ the shortest prefix $A_{k^*}$ that violates either of these two conditions (lines \ref{line:kstar} and \ref{line:Supdate}). Note that, by design, $i^*$ and $j^*$ (and thus $k^*$) are always well defined: $X_{d}$ (i.e., the set of feasible elements at the end of the sequence $A = [a_1,\dots, a_d]$) is empty, this implies that also $G_d$ and $E_d$ are empty as well. In any iteration of the while loop we either have that $i^* \le j^*$, in which case we say that the cost condition has been triggered, or $j^* < i^*$, in which case we say that the value condition has been triggered. \threshold exits the while loop when either set $X$ is empty, or the value condition has been triggered $\ell$ times.

        \begin{algorithm}[t]       \caption{\threshold$(X,\tau,\e,B)$}
        \label{alg:nonmonotone_sequence}
        \begin{spacing}{1.15}
        \begin{algorithmic}[1]
            \STATE \textbf{Environment:} Submodular function $f$ and additive cost function $c$
            \STATE \textbf{Input:} set $X$ of elements, threshold $\tau>0$, precision $\e\in(0,1)$, and budget $B$
            \STATE  \textbf{Initialization: }$ S \gets \emptyset $, $\textrm{ctr} \gets 0$
            \STATE $X \gets \{x \in X: f(x) \ge \tau \, c(x)\}$ \COMMENT{Filtering low-density elements in $X$}
            \WHILE{$X \neq \emptyset$ and $\textrm{ctr}<\nicefrac{1}{\e^2}$} \label{line:while}
            \STATE  $[a_1,a_2,\dots ,a_d]\leftarrow  \sampling(S,X,B)$\;
            \FOR{$i=1,\dots,d$}
            \STATE  $A_i\gets \{a_1,a_2,\dots,a_i\}$ \COMMENT{Candidate sequence}
            \STATE  \label{line:X_i}
              $X_i \gets \{a \in X\setminus A_i: c(a) + c(S \cup A_i) \le B\}$ \COMMENT{Feasible elements}
            \STATE  \label{line:G_i}
              $G_i \gets \{a \in X_i: f(a\,|\,S \cup A_i) \ge \tau \cdot c(a)\}$ \COMMENT{Good elements}
            \STATE   \label{line:E_i}
              $E_i \gets \{a \in X_i: f(a\,|\,S \cup A_i) < 0\}$
              \COMMENT{Negative elements}
            \ENDFOR
            \STATE    \label{line:cost_condition}
            $i^* \gets \min \{i: c(G_i) \le (1-\e)c(X)\}$ \COMMENT{Cost condition}
            \STATE    \label{line:value_condition}
                $\displaystyle j^* \gets \min \Big\{j:  \sum_{x \in G_j} \e  f(x\,|\,S \cup A_j) \le \sum_{x \in E_j}|f(x\,|\,S \cup A_j)|\Big\}$
            \COMMENT{Value condition}
            \STATE  $k^*\gets\min\{i^*,j^*\}$ \label{line:kstar}
            \STATE $S \leftarrow S \cup A_{k^*}$, $X \gets G_{k^*}$ \label{line:Supdate}
            \STATE \textbf{if} $j^* < i^*$ \textbf{then} $\textrm{ctr} \leftarrow \textrm{ctr}+1$ \COMMENT{Counter of the value condition}
            \ENDWHILE
            \STATE \textbf{return} {$S$}
        \end{algorithmic}
        \end{spacing}
        \end{algorithm}

        Consider now the complexity of \threshold. The crucial observation is that each iteration of the while loop can be carried over in {\em one single} adaptive round of calculation: once the prefix $A$ is drawn, all the auxiliary sets and the conditions can be independently computed for each prefix. All the value queries depend only on $S$ and $A$, and not on the answer to the queries issued to compute $k^*$. The cost conditions and the value conditions are designed to bound the number of times the while loop is repeated: in each adaptive round, either the counter \textrm{ctr} of the value conditions increases by $1$ (this can happen at most $\nicefrac 1{\e^2}$ times) or the overall cost of the candidate set decreases by a factor of $(1-\e)$. To have a clear bound on how many times the cost condition can be triggered we introduce the quantity $\kappa_X$ that measures how much the cost function varies in a set $X$: $ \kappa_X = \max_{x,y \in X} \nicefrac{c(x)}{c(y)}$; we adopt the convention that the previous ratio is set to infinity if there exists an element with zero cost in $X$ (and $\kappa_\emptyset = 0$). We formalize these observations in the following Lemma.
        \begin{lemma}
        \label{lem:non_monotone_adaptivity}
            For any input $X \subseteq \N$, $\tau >0$, $\e \in (0,1)$ and $B>0$, \threshold\ terminates in $O\left( \frac{1}{\e}\log (n\kappa_X)\right)$ adaptive rounds and issues $O\left( \frac{n^2}{\e}\log (n\kappa_X)\right)$ value queries.
        \end{lemma}
        \begin{proof}
            The adaptive rounds correspond to iterations of the while loop. As already argued, once a new sequence is drawn by \sampling, all the value queries needed are deterministically induced by it and hence can be assigned to different independent machines. Gathering this information we can determine $k^*$ and start another iteration of the while loop. Bounding the number of such iterations where the value condition is triggered  is easy, since it is forced to be at most $\nicefrac{1}{\e^2}$. 
            
            For the cost condition we use the geometric decrease in the total cost of $X$: every time it is triggered (i.e., $i^* \le j^*$ and thus $k^*=i^*$ in line \ref{line:kstar}), the total cost of the feasible elements $X$ is decreased by at least a $(1-\e)$ factor.
            At the beginning of the algorithm, that cost is at most $Cn$, with $C=\max_{x \in X} c(x)$, and it needs to decrease below $c= \min_{x \in X} c(x)$ to ensure that $X = \emptyset$. Let $r$ the number of iterations of the while loop where the cost condition was triggered ($i^*\le k^*$). In the worst case we need $Cn(1-\e)^r < c$,  meaning that the total number of iterations of the while loop is upper bounded by $\nicefrac{\log \left(n \cdot \kappa_X \right)}{\e} + \nicefrac{1}{\e^2}$ (which is in $O(\nicefrac{\log \left(n \cdot \kappa_X \right)}{\e})$ due to the assumption of $\nicefrac 1\e\le\log n).$
            
            Finally, the query complexity is just a $n^2$ factor greater than the adaptivity: each adaptive round entails $O(n^2)$ value queries, since the length of the sequence outputted by \sampling \ may be linear in $n$ and for each prefix the value of the marginals of all the remaining elements has to be considered.
        \end{proof}
        Having settled the adaptive and query complexity of \threshold, we move to proving that our conditions ensure good expected marginal density. \threshold adds to $S$ prefixes $A_i = [a_1,\dots,a_i]$ such that for all $j<i$ the average contribution to $S \cup A_j$ of the elements in $X \setminus A_j$ is comparable to $\tau$. Then the expected value of $f(A_i\,|\,S)$ should be comparable to $\tau \, \E{c(A_i)}$. To compute $A_i$ in one single parallel round, one can {\em a posteriori} compute for each prefix $A_j$ of $A$ the {\em a priori } (with respect to the uniform samples) expected marginal value of $a_{j+1}$; with $a_{j+1}$ drawn uniformly at random from the elements in $X \setminus A_j$ still fitting the budget, this means simply averaging over their marginal densities. We formalize this argument in the following Lemma. 
        \begin{lemma}
            \label{lem:non_monotone_val}
            For any input $X \subseteq \N$,  $\tau>0$,  $\e \in (0,1)$,  and  $B>0$, the random set $S$ outputted by \threshold is such that $c(S) \le B$ and the following inequality holds true:
            \[
                {\E{f(S)}\ge \allowbreak (1-\e)^2\tau \,\E{c(S)}}.
            \]
        \end{lemma}
        \begin{proof}
            We first note that $c(S) \le B$ with probability $1$ by \Cref{lem:budget_sampling} and the fact that the costs are non-negative. We focus then on the rest of the statement. \threshold adds a chunk of elements to the solution in each iteration of the \emph{while} loop. This, along with the fact that each of these chunks is an ordered prefix of a sequence outputted by \sampling, induces a total ordering on the elements in $S$. To facilitate the presentation of this proof, we assume, without loss of generality, that the elements of $S$ are added one after the other, according to this total order. 
            Let us call the $t$-th such element $s_t$, and let $\F_t$ denote the random story of the algorithm up to, but excluding, the adding of $s_{t}$ to its chunk's random sequence. We show that whenever any $s_t$ is added, its expected marginal density is at least $(1-\e)^2\tau$.
        
            Fix any $s_t$ and any story $\F_t$ of the algorithm up to that point. and consider the iteration of the while loop in which it is added to the solution. We denote with $ S_{old}$ the partial solution at the beginning of that while loop, with $A$ the sequence drawn in that iteration by \sampling and with $X$ the candidate set at that point: 
            \[
                X = \{x \in \N: f(x\,|\,S_{old}) \ge \tau \cdot c(x), c(x) + c(S_{old}) \le B\}.
            \]
            Let $A_{(t)}$ be the prefix of $A$ up to, and excluding, $s_t$. Then $S_t = S_{old} \cup A_{(t)}$ is the set of all elements added to the solution before $s_t$.  Note that, given $\F_t$, the sets $X$, $S_{old}$ and $A_{(t)}$ are deterministic, while the rest of $A$ is random. Recall that $s_t$ is drawn uniformly at random from $X_{(t)} = \{x\in X \setminus A_{(t)}|c(S_t)+c(x)\leq B\}$. 
            We need to show that 
            \begin{equation}
                \label{eq:value2}
                \E{f(s_t\,|\,S_t)\,|\,\F_t} \ge  (1-\e)^2\tau \,\E{c(s_t)\,|\,\F_t}\,,
            \end{equation}where the randomness is  with respect to the uniform sampling in $X_{(t)}$. 
            
            If $s_t$ is the first element in $A$, then \Cref{eq:value2} holds because all the elements in $X$ exhibit a marginal density greater than $\tau$. If $s_t$ is not the first element, it means that the value and cost condition were not triggered for the previous element. Let $G$ and $E$ the sets of the good and negative elements with respect to $S_t$, i.e., $G = \{x \in X_{(t)}: f(x\,|\,S_t) \ge \tau c(x)\}$ and $E = \{x \in X_{(t)}: f(x\,|\,S_t)< 0\}$, which are also deterministically defined by $\F_t$. Finally, let $p_x$ be $\p{s_{t}=x\,|\,\F_{t}}$ which is equal to  $|X_{(t)}|^{-1}$ for all $x \in X_{(t)}$ and zero otherwise, then
            \begin{align}
                \nonumber
                \mathbb{E}[f(s_t\,|\,S_{t})\,|\,\F_{t}] &- (1-\varepsilon)^2 \tau \,\E{c(s_t)\,|\,\F_{t}}\\
                \nonumber
                =&  \sum_{x \in X} p_x  f(x\,|\,S_{t}) - (1-\varepsilon)^2 \tau \sum_{x \in X} p_x c(x)   \\\nonumber
                \ge&  \sum_{x \in G\cup E}p_x f(x\,|\,S_{t}) 
                - (1-\varepsilon)^2 \tau \sum_{x \in X} p_x c(x)\\
                \label{eq:values_bound}
                =& \, \e \sum_{x \in G}p_x f(x\,|\,S_{t}) - \sum_{x \in E}p_x|f(x\,|\,S_{t})| \\
                \label{eq:costs_bound}
                & +(1-\e)\tau \Big[ \sum_{x \in G}
                p_xc(x) - (1-\varepsilon)\sum_{x \in X}p_xc(x)\Big{]}\\
                \label{eq:threshold_bound}
                & + (1-\e)\sum_{x \in G}p_x \Big[f(x\,|\,S_{t})- \tau c(x)\Big]  \ge 0 \,.
            \end{align}
            Expressions \eqref{eq:values_bound} and \eqref{eq:costs_bound} are nonnegative since the value and cost conditions were not triggered before adding $s_t$. Expression \eqref{eq:threshold_bound} is nonnegative by the definition of $G$. 
            
            We have then shown that, for all $t$ and $\mathcal{F}_t$, the expected marginal density of the $t$-th element (if any) added by our algorithm is large enough. Next we carefully apply conditional expectations to get the desired bound. We have already argued how the algorithm induces an ordering on the elements it adds to the solution, so that they can be pictured as being added one after the other. To avoid complication in the analysis we suppose that after the algorithm stops it keeps on adding dummy elements of no cost and no value, so that in total it runs for $n$ time steps. 
            Consider the filtration $\{\F_t\}_{t=1}^n$ generated by the stochastic process associated to the algorithm, where $\F_t$ narrates what happens up to the point when element $s_t$ is considered. So that $\F_1$ is empty and $\F_n$ contains all the story of the algorithm except for the last---possibly dummy---added element.
            From the above analysis we know that for each time $t \in \{1,\dots,n\}$ and any possible story $\F_t$ of the algorithm we obtain \Cref{eq:value2}. Note that the inequality holds also if one considers the dummy elements after the actual termination of the algorithm. 
            \begin{align*}
            \nonumber
                \E{f(S)} &= \E{\sum_{t=1}^n f(s_t\,|\,S_t)}
                %\label{eq:linearity}
                = \sum_{t=1}^n \E{f(s_t\,|\,S_t)}
                %\label{eq:tot}
                = \sum_{t=1}^n \E{\E{f(s_t\,|\,S_t)\,|\,\F_t}}\\
                %\label{eq:linearity2}
                &= \E{\sum_{t=1}^n \E{f(s_t\,|\,S_t)\,|\,\F_t}}
                %\label{eq:value}
                \ge  (1-\e)^2\tau \,\E{\sum_{t=1}^n \E{c(s_t)\,|\,\F_t}}\\
                %\label{eq:tot2}
                &=  (1-\e)^2\tau \,\E{\sum_{t=1}^n c(s_t)}
                %\nonumber
                =  (1-\e)^2\tau \,\E{c(S)}\,.
            \end{align*}
            The second and fourth equalities hold by linearity of expectation, the third and fifth equalities hold by the law of total expectation. 
            Finally, the inequality follows from monotonicity of the conditional expectation and inequality \eqref{eq:value2}.
        \end{proof}

        The previous Lemma establishes that $S$ has expected density comparable to our threshold $\tau$. We move on to showing that when \threshold\ terminates, we can bound the value of good candidates still fitting inside the budget that are left outside the solution.

        \begin{lemma}
        \label{lem:non_monotone}
        For any input $X \subseteq \N$,  $\tau>0$,  $\e \in (0,1)$,  and  $B>0$, the random set $S$ outputted by \threshold respects that $\sum_{x \in G}f(x\,|\,S) \le \e f(S)$, where $G$ contains the good elements that still fit in the remaining budget: 
        \[
            G = \{x \in X\setminus S:\allowbreak f(x\,|\,S)  \ge  \tau c(x), \ c(x) + c(S)\le B \}.
        \]
        \end{lemma}
        \begin{proof}
            \threshold terminates in one of two cases. Either $X$ is empty, meaning that there are no elements still fitting in the budget whose marginal density is greater than $\tau$---and in that case the inequality we want to prove trivially holds---or the value condition has been triggered $\ell = \lceil \tfrac 1{\e^2} \rceil$ times.  
            
            For the latter, suppose that the value condition was triggered for the $i$th time during iteration $t_i$ of the while loop. Denote by $ S_{t_i}$ the solution at the end of that iteration. We are interested in the sets $X_{j^*}, G_{j^*}, E_{j^*}$ of that particular iteration of the while loop. In order to be consistent across iterations, we use $X_{(i)}$, $G_{(i)}$, and $E_{(i)}$ to denote these sets for iteration $t_i$. 
            Since the value condition was triggered during $t_i$, we have $\e\sum_{x \in G_{(i)}}f(x\,|\,S_{t_i}) \le \sum_{x \in E_{(i)}} |f(x\,|\,S_{t_i})|$.
            Clearly, $G_{(\ell)}$ is what we denoted by $G$ in the statement and $S_{t_{\ell}}$ is $S$. Also notice that $E_{(j)}\cap E_{(k)} = \emptyset$ for $j\neq k$.
            
            Now, by non-negativity of $f$ and \Cref{lem:folklore}, we have
            \[
                0 \le f\Big( S_{t_{\ell}} \cup {\textstyle \bigcup\limits_{i=1}^{\ell}} E_{(i)} \Big) \le f(S_{t_{\ell}}) + \sum_{i=1}^{\ell}\sum_{x \in E_{(i)}} f(x\,|\,S_{t_i})\,.
            \]
            Rearranging the terms and using the value condition, we get
            \begin{align*}
                f(S_{t_{\ell}}) &\ge \sum_{i=1}^{\ell}\sum_{x \in E_{(i)}} |f(x\,|\,S_{t_i})| \ge \sum_{i=1}^{\ell}\e \sum_{x \in G_{(i)}} f(x\,|\,S_{t_i})
                \ge \frac 1 \e \sum_{x \in G_{(\ell)}}f(x\,|\,S_{t_{\ell}})\,.
            \end{align*}
            The last inequality follows from submodularity  and the fact that that $G_{(1)}\supseteq \ldots\supseteq G_{(\ell)}$.
            \end{proof}

    \subsection{{\normalfont{\textsc{ParKnapsack}}}: Our algorithm}
    \label{subsec:knapsack}
    
        We are ready to present our algorithm \knapsack for non-monotone submodular maximization subject to a knapsack constraint. We refer to the pseudocode for the details. 
        \knapsack partitions the ground set $\N$ into two sets, according to the cost function $c$: the ``small'' elements $\N_-$, each with cost smaller than $\nicefrac Bn$, and the ``large'' elements $\N_+=\N \setminus \N_-$ (line \ref{line:partition}). The set $\N_-$ is fed to the low adaptive complexity unconstrained maximization routine \unMax\ (line \ref{line:unconstrained}) to obtain a candidate solution $S_-$ as discussed in \Cref{sec:prelims}. 
        For the large elements we need more work. For $\Theta(\nicefrac{\log(n)}{\e})$ parallel guesses $\tau_i$ of the ``right'' threshold (lines \ref{line:for_guesses} and \ref{line:guesses}) two independent blocks of instructions run:
        \begin{itemize}
            \item[(i)] $\Theta(\nicefrac{1}{\e}\log({\nicefrac 1{\e}}))$ fresh sub-samples $H^{i,j} \sim \N_+(p)$\footnote{Recall, $X \sim \N_+(p)$ means that $X$ is a random set to which each element of $\N_+$ belong independently with probability $p$.} are drawn from $\N_+$ (line \ref{line:sample_j}) and \threshold is called on them, with threshold $\tau_i$ (line \ref{line:thresh_j}). The best set computed in this first block (over all the thresholds $\tau_i$ and repetitions $j$) is called $S_{\ge}$ (line \ref{line:candidates}).
            \item[(ii)] A set $H_i\sim \N_+(p)$ is sampled once (line \ref{line:sample_l}) and  \threshold is called $\Theta(\nicefrac{1}{\e}\log({\nicefrac 1{\e}}))$ independent times on it (line \ref{line:for_l} and \ref{line:thresh_l}). The best set computed in this second block (over all the thresholds and independent calls to \threshold) is called $S_{<}$ (line \ref{line:candidates}).
        \end{itemize}
        Finally, \knapsack returns the best between $S_-$, $S_{\ge}$, $S_<$, and $x^*$ (line \ref{line:argmax}), which is the singleton with the largest value (line \ref{line:max}). Note, the partition between $\N_+$ and $\N_-$ is critical in bounding the adaptivity of \threshold, as $\kappa_{\N_+}\le n$ (and the same holds for any subset of $\mathcal N_+$).
        
        \begin{algorithm}
        \caption{\knapsack}
        \label{alg:non_monotone_alg}
        \begin{spacing}{1.1}
        \begin{algorithmic}[1]
        \STATE \textbf{Environment:} Submodular function $f$, additive cost function $c$ and budget $B$
        \STATE \textbf{Input:} Precision parameter $\e \in (0,1)$ 
        \STATE $\alpha \gets \nicefrac{(3-\sqrt{3})}6$, $p \gets \nicefrac{(\sqrt 3 - 1)}2$ 
        \label{line:init}
        \STATE $\N_- \gets \{x \in \N: c(x) < \nicefrac{B}{n}\}$, $\N_+ \leftarrow \N\setminus \mathcal{N_-}$ \label{line:partition}
        \STATE $x^* \gets \argmax_{x \in \N}f(x)$, $\hat \tau \gets \alpha n \cdot \nicefrac {f(x^*)}{B}$ \label{line:max}
        \STATE $S_- \gets \unMax (\N_-,\e)$\label{line:unconstrained} 
        \FOR{$i=0, 1,\dots,\lceil{\nicefrac {\log n}
{\e}\rceil}$ in parallel } \label{line:for_guesses}
        \STATE $\tau_i \gets \hat \tau \cdot (1-\e)^i$ \label{line:guesses}
        \FOR{$j=1, 2,\dots, \lceil\nicefrac 1\e \log(\nicefrac{1}{\e})\rceil$ in parallel} \label{line:parallel}
        \STATE Sample independently a subset $H^{i,j} \gets \N_+(p)$
        \label{line:sample_j}
        \STATE $S^{i,j}\gets \threshold(H^{i,j},\tau_i,\e,B)$ \label{line:thresh_j}
        \ENDFOR
        \STATE Sample independently a subset $H_i \gets \N_+(p)$ \label{line:sample_l}
        \FOR{$\ell = 1, 2, \dots, \lceil\nicefrac 1\e \log(\nicefrac{1}{\e})\rceil$ in parallel} \label{line:for_l}
            \STATE $S_{i,\ell}\gets \threshold(H_i,\tau_i,\e,B)$ \label{line:thresh_l}
        \ENDFOR
        \ENDFOR
        \STATE $S_{\ge} \in \argmax_{i,j} \{f(S^{i,j})\}$, $S_{<} \in \argmax_{i,\ell} \{f(S_{i,\ell})\}$ \label{line:candidates}
        \STATE \textbf{return}  $T \in\argmax\{f(S_{\ge}),f(S_<),f(x^*),f(S_-)\}$ \label{line:argmax}
        \end{algorithmic}
        \end{spacing}
        \end{algorithm}

        At a high level, the main argument behind the approximation guarantee result is simple. Assume to have access to the value of $\OPT$, then running \threshold with threshold proportional to $\OPT$ gives rise to two cases: either our low-adaptivity routine \threshold fills a constant factor of the budget (let's say half of it) in expectation --- in which case \Cref{lem:budget_sampling} ensures us the good quality of the solution--- (this is captured by $S_{\ge}$), or the routine stops earlier --- meaning that there are no ``good'' elements left to add --- as stated in \Cref{lem:non_monotone} (this is captured by $S_<$). In practice, given the probabilistic nature of our arguments, we need to carefully design a sampling strategy so that overall the final solution yields a good approximation. This motivates the two blocks of introductions above, where (i) corresponds to the first case, i.e., a good fraction of the budget is filled in expectation, and (ii) to the second. 
        \begin{theorem}
        \label{thm:non_monotone}
            For any $\e \in (0,1)$, \knapsack with precision parameter $\e$ is a $(9.465 + \Theta(\e))$-approximation algorithm that runs in $O(\frac{1}{\e}\log n)$ adaptive rounds and issues $O(\frac{n^2}{\e^3}\log^2 \!n\log\!{\frac 1\e})$ value queries.
        \end{theorem}
        \begin{proof}
            We first argue about the query and adaptive complexity. The algorithm communicates with the value oracle in four lines of the algorithm that can be run in parallel as they do not need each other's output. First, in line \ref{line:max} it computes $x^*$ in one adaptive round and $O(n)$ value queries. Second, in line \ref{line:unconstrained} the routine \unMax is called; we use the combinatorial $(2+\e$)-approximation by \citet{ChenDK21} which runs in $O(\nicefrac 1{\e} \log (\nicefrac 1{\e}))$ adaptive rounds and exhibits linear query complexity. 
            Finally, in lines \ref{line:thresh_j} and \ref{line:thresh_l} the routine \threshold is called on some subsets of $\N_+$. By definition of small elements, it holds that $\kappa_{\N_+} \le n$, thus, by \Cref{lem:non_monotone_adaptivity}, each call of \threshold has $O(\nicefrac {\log n}{\e} )$ adaptive complexity and issues $O(\nicefrac {n^2}{\e} \log n)$ value queries. 
            Lines \ref{line:thresh_j} and \ref{line:thresh_l} of \knapsack are repeated independently $O(\nicefrac{1}{\e^2} \log n \log (\nicefrac 1{\e}))$ times; this does not affect the adaptive complexity but only the query complexity in a multiplicative way. All in all, the total query and adaptive complexity of the algorithm is dominated by this latter term (the repetitions of lines \ref{line:thresh_j} and \ref{line:thresh_l}). This concludes the proof of the adaptive and query complexity of \knapsack. 

            We move our attention to the approximation guarantee and we start introducing some notation. Let $O^*$ be any optimal solution with $f(O^*) = \OPT$, and $O^+,O^-$ be its intersections with $\N_+$ and $\N_-$ respectively, and define set $O$ as an optimal solution in $\N_+$, i.e., 
            \[
                O\in \argmax\{f(T) : T\subseteq \N_+,\ c(T)\le B\}.
            \]
            Note, all the sets mentioned in the previous sentence are deterministic. Set $\tau^*=\alpha \cdot \nicefrac{\OPT}B$ (where parameter $\alpha$ is defined in line \ref{line:init});  by the parallel guesses of the threshold, there exists one iteration of the outer for loop (line \ref{line:for_guesses}) with threshold $\tau_{i^*}$ that is close enough to $\tau^*$, i.e., such that 
            \begin{equation}
            \label{eq:tau_istar}
                (1-\e)\tau^* \le \tau_{i^*} < \tau^*.
            \end{equation}
            The existence of such $\tau$ follows from  the definitions of $\tau^*$ and $\hat \tau$ (see line \ref{line:max}) and the fact that $n \cdot f(x^*) \ge f(O^*) \ge f(x^*)$ (the first inequality follows by submodularity and the fact that there are $n$ elements in $\N$, and the second one by optimality of $O^*$). In the analysis we only need the random sets corresponding to this threshold $\tau_{i^*}$, for this reason we often omit the index $i^*$. 
            
            Given any set $X \subseteq \N_+$, we say that it ``fills'' the budget if the expected cost of the output of $\threshold(X,\tau,\e, B)$ is at least $(1-\e)B/2.$ Note, the expectation here is with respect to the randomness of \threshold. The crux of our case analysis is $\mathcal E_{\ge}$, which is defined as the event that a subset $H \sim \mathcal N_+(p)$ fills the budget. We denote as $\mathcal E_<$ the complementary event. We have now two cases, depending on the value of $\p{\mathcal E_{\ge}}$:
            \begin{itemize}
                \item If $\p{\mathcal E_\ge} \ge \e$, then we prove that at least one of the samples $H^j$ (the index $i^*$ is omitted) drawn in the first block fills the budget, and therefore its corresponding solution (and thus also $S_{\ge}$) is a good approximation to $\OPT$ (\Cref{cl:bound_ge}).
                \item If $\p{\mathcal E_\ge}<\e$ then our analysis contemplates multiple intermediate steps  (\Cref{cl:unconstrained,cl:p_bound,cl:anyreal}) and culminates in \Cref{cl:bound_le}, where the value of $O$ is upper bounded using $f(S_-)$ and $f(S_<)$.
            \end{itemize}
            Note: while $\mathcal E_\ge$ is a random event (its randomness only resides in the random experiment of sampling $H$), the value of its probability is a fixed property of the input (i.e., the set $\mathcal N$ and the submodular function $f$).

            We start by the first case: if the probability of filling the budget is at least $\e$, then $S_{\ge}$ is an $\tfrac \alpha 2$ approximation of $\OPT$ (the value of $\alpha$ is defined in line \ref{line:init}).
            \begin{claim}
            \label{cl:bound_ge}
                If $\p{\mathcal E_\ge} \ge \e$, then the following inequality holds:
                \[
                    \OPT \le \frac{2}{\alpha(1-\e)^5} \E{f(S_\ge)} \le (2(3 + \sqrt{3}) + O(\e))\ALG.
                \]
            \end{claim}
            \begin{proof}[Proof of \Cref{cl:bound_ge}]
                We focus on the first block of instructions (lines \ref{line:sample_j} and \ref{line:thresh_j}) in the for loop corresponding to the right threshold $\tau_{i^*}$. Fix any realization of some $H^j$ in line \ref{line:sample_j} of \knapsack that fills the budget (we denote with $\mathcal E_{\ge}^{j}$ such event and $\mathcal E_{<}^{j}$ its complementary), it means that $\E{c(S^j) \mid \mathcal E_{\ge}^{j}} \ge (1-\e) B/2$ (where the expectation is with respect to the run of \threshold, not the sampling). By \Cref{lem:non_monotone_val} we have the following chain of inequalities:
                \begin{align*}
                    \E{f(S^j) \mid \mathcal E_{\ge}^{j}} &\ge (1-\e)^2\tau_{i^*} \,\E{c(S_L) \mid \mathcal E_{\ge}^{j}} \tag*{(by \Cref{lem:non_monotone_val})}\\
                    &\ge (1-\e)^3 \frac{\alpha}{B}\OPT\cdot \E{c(S_L)\mid \mathcal E_{\ge}^{j}}\tag*{\text{(\Cref{eq:tau_istar})}}\\
                    &\ge (1-\e)^4 \frac{\alpha}{2}\OPT\tag*{(\text{$H^j$ fills the budget})}.    
                \end{align*}
                Now, let $\mathcal G_{\ge}$ the union of the $\mathcal E_\ge^j$; i.e., $\mathcal G_{\ge}$ is the event that at least one of the $H^j$ fills the budget. Clearly, by the previous chain of inequalities it holds that under $\mathcal G_{\ge}$ the expected value of at least one of the $S^j$ (and therefore of $S_\ge$) is larger than $(1-\e)^4 \frac{\alpha}{2}\OPT$. To conclude, we only need to argue that $\G_\ge$ has a good probability of happening:
                \[
                    \p{\G_{\ge}^C} = \p{\bigcap_j \mathcal E^j_{<}} = \prod_j\p{\mathcal E^j_{<}} = \prod_j \p{\mathcal E_<} \ge \e,
                \]
                where the second equality is due to the independent parallel samples (repetitions of line \ref{line:sample_j} in the inner loop) and the inequality is due to the assumption on $\p{\mathcal E_{\ge}}$ and the number of these parallel samples. So we have that $\p{\mathcal G_{\ge}} \ge 1-\e$. We can finally apply the law of total expectation:
                \[
                    \E{f(S_{\ge})} \ge \E{f(S_{\ge}) \mid \mathcal G_{\ge}} \p{\mathcal G_{\ge}}\ge  (1-\e)^5 \frac{\alpha}2\OPT.
                \]
                If we plug in the chosen value of $\alpha = \nicefrac{(3 - \sqrt 3)}6$ we get the desired result:
                \[
                     \OPT \le \frac{2(3 + \sqrt{3})}{(1-\e)^5} \E{f(S_\ge)} \le (2(3 + \sqrt{3}) + O(\e))\ALG.
                \]
                From the above inequality it is clear that we use the $O(\e)$ notation to simplify the approximation-factor-dependence on $\e$ to the first order term. If for example we want to study any $\e \in (0,\nicefrac 15)$, then we have the uniform bound $\nicefrac{2(3 + \sqrt{3})}{(1-\e)^5} \le 2(3 + \sqrt{3}) + 100 \e$. 
            \end{proof}
            We move now to consider the other case, when the probability of filling the budget is low. We start relating $f(O^-)$ with $f(S_-)$.       
            \begin{claim}
            \label{cl:unconstrained}
                The following inequality holds: $f(O^-) \le (2+\e) \cdot f(S_-)$.
            \end{claim}
            \begin{proof}[Proof of \Cref{cl:unconstrained}]
                We can upper bound $f(O^-)$ with the unconstrained max on $\N_-$, since there are at most $n$ elements in $\N_-$ and each of them costs at most $\nicefrac{B}n$ (for a total cost smaller than the budget $B$). Using the combinatorial $(2+\e)$-approximation of \citet{FK19}, we get 
                \[
                    f(O^-) \le (2+\e) \cdot f(S_-). \qedhere
                \]
            \end{proof}
                    
            Now that we have bounded the fraction of $\OPT$ that depends on the small elements $\N_-$ we move our attention to $O^+$. By maximality of $O$ in $\N_+$, it holds that $f(O^+) \le f(O)$, so we upper bound the latter. We focus on the second block of instructions (lines \ref{line:sample_l} to \ref{line:thresh_l}), in the iteration of the outer loop corresponding to the right threshold (so that we may omit the dependence on the index $i$). Denote with $H$ the sampled set (in line \ref{line:sample_l}), with $O_H = O \cap H$ its intersections with $O$. We define the (random) $S^*$ as follows: if there exists a repetition $\ell$ such that $c(S_{\ell})<\nicefrac B2$, then $S^* = S_{\ell}$ (with ties broken arbitrarily), otherwise $S^* = \emptyset$. Set $S^*$ is crucial in the remaining analysis.
            
            \begin{claim}
            \label{cl:p_bound}
                The following relation
                holds: 
                \[
                    p (1-p) f(O) \le \E{f( O_H \cup S^*)}.
                \]
            \end{claim}
            \begin{proof}[Proof of \Cref{cl:p_bound}] Consider any ordering of the elements in $O=\{o_1,o_2, \dots, o_m\}$ for some $m$. We have the following:
            \begin{align}
                \E{f(O_H)} &= \E{\sum_{i=1}^m \ind{o_i \in H}f(o_i|\{o_1, \dots, o_{i-1}\} \cap H)}\tag*{(Telescopic argument)}\\
                &\ge \E{\sum_{i=1}^m \ind{o_i \in H}f(o_i|\{o_1, \dots, o_{i-1}\})}\tag*{(By submodularity)}
                \\
                &= \sum_{i=1}^m \p{o_i \in H}f(o_i|\{o_1, \dots, o_{i-1}\})\tag*{(By linearity)}\\
                \label[ineq]{eq:O_HvsO}
                &= p \cdot \sum_{i=1}^m f(o_i|\{o_1, \dots, o_{i-1}\}) = p {f(O)}. 
            \end{align}
            Fix now any possible realization of $O_H$ and apply \Cref{lem:sampling} on the submodular function $g(\,\cdot\,) = f(\,\cdot\, \cup O_H)$ and the random set $S^*\setminus O_H$. Indeed, elements belong to $H$ with probability $p$, thus any element in $\N\setminus O_H$ belongs to $S^* \subseteq H$ with probability at most $p$. There is a crucial observation here: the random set $S^*$ depends on the particular realization of $H$, but every fixed element $x$ belongs to $H$ with probability exactly $p$, so overall it belongs to $S^*$ with probability at most $p$ (where the ``at most'' is due to the randomness of \threshold in the realized $H$). We obtain
            \[\E{f(S^* \cup O_H) \,|\, O_H} \ge (1-p) f(O_H) \, ,\]
            which can be combined with \Cref{eq:O_HvsO} (for the first inequality) and the law of total expectation (for the equality) to conclude the proof:
            \[
                p (1-p) f(O) \le (1-p) \E{f(O_H)} \le \E{\E{f(S^* \cup O_H)\,|\,O_H}} = \E{f(S^* \cup O_H)}. \qedhere
            \]
            \end{proof}
            In the previous Claim we related the expected value of $f(O_H \cup S^*)$ with $f(O)$; in a further step we relate $f(O_H \cup S^*)$ with $f(S_<)$. We focus once again on the iteration of the outer for loop corresponding to the right threshold $\tau_{i^*}$, so we omit the dependence on the index $i$. For any repetition of line \ref{line:thresh_l}, we call $\mathcal G_{\ell}$ the event that $c(S_{\ell}) < B/2$ and $\mathcal G_<$ their union. 
        \begin{claim}
        \label{cl:anyreal}
            Fix any run of the algorithm such that $\mathcal G_<$ is realized, the following relation holds:
            \[
                f(S^* \cup O_H) \le (1+\e) f(S_<) +\tau^* {c(O_H)}.
            \]
        \end{claim}
        \begin{proof}[Proof of \Cref{cl:anyreal}]
            We condition with respect to $\mathcal G_<$, so we focus on any $\ell$ such that $\G_{\ell}$ is realized. We split the high marginal density elements of $H$ (with respect to $S_{\ell}$) into two sets, depending on whether these are feasible or not:
            \begin{align*}
                G & =\{x \in H: f(x\,|\,S_{\ell}) \ge \tau c(x), c(x) + c(S_{\ell}) \le B\}\,,\\
                \tilde G & =\{x \in H: f(x\,|\,S_{\ell}) \ge \tau c(x), \ c(x) + c(S_{\ell}) > B\}\,.
            \end{align*}
            We decompose the contribution of the elements in $O_H \setminus S_{\ell}$ (via \Cref{lem:folklore}) using $G$ and $\tilde G$. Note that since we are conditioning on $c(S_{\ell})<B/2$, there can be at most one single element $\tilde x$ in $\tilde G \cap O_H$. Indeed, each such element has cost at least $B-c(S_{\ell})>B/2$, yet belongs to the set $O$ that has cost at most $B$. 
            
            Before delving into the calculations, we observe that is safe to assume that $f(x^*) < \alpha \cdot \OPT/2$: if this was not the case, in fact, we could directly claim that $f(x^*)$ (and therefore $\ALG$) yields the approximation guarantee stated in the Theorem. Having settled this simple corner case, we distinguish two cases depending on whether $\tilde G \cap O_H$ is empty or not. 
            
            If $\tilde G \cap O_H \neq \emptyset$, then $\tilde G \cap O_H =\{\tilde x\}$, i.e., it is a singleton as we already argued. In that case,
            \begin{align*}
            \nonumber
                f(S_{\ell}\cup O_H) &\le f(S_{\ell}) +\sum_{x \in \tilde G \cap O_H}f(x\,|\,S_{\ell})  +\sum_{x \in G \cap O_H}f(x\,|\,S_{\ell})+ \!\!\sum_{x \in O_H \setminus (G \cup \tilde G)} \!\!f(x\,|\,S_{\ell}) \\
                &\le (1+\e) f(S_{\ell}) +f(\tilde x\,|\,S_{\ell})+ \!\!\sum_{x \in O_H \setminus (G \cup \tilde G)} \!\!f(x\,|\,S_{\ell}) \tag*{\text{(\Cref{lem:non_monotone} and definition of $\tilde x$)}}\\
                &< (1+\e) f(S_{\ell}) + f(\tilde x) + \tau^* ({c(O_H)} - c(\tilde x)) \tag*{\text{(Submodularity; definitions of $G$, $\tilde G$, $\tau$)}}\\
                &< (1+\e) f(S_{\ell}) + \tfrac{\alpha}{2} \OPT + \tau^* ({c(O_H)} - \tfrac B2) \tag*{\text{($f(\tilde x)\le f(x^*) < \tfrac{\alpha}{2} \OPT$; $c(\tilde x) > \frac{B}{2}$)}}\\
                &< (1+\e) f(S_<) +  \tfrac{\alpha}{2} \OPT+ \tau^* {c(O_H)} - \tfrac{\alpha}{2} \OPT  \tag*{\text{($f(S_<) \ge f(S_{\ell})$; definition of $\tau^*$)}}\\
                &= (1+\e) f(S_<) + \tau^* {c(O_H)}  \,.
            \end{align*}

            On the other hand, if $\tilde G \cap O_H = \emptyset$, the chain of inequalities is similar but simpler.  We have
        \begin{align*}
        \nonumber
            f(S_{\ell} \cup O_H) &\le f(S_{\ell}) +\sum_{x \in G \cap O_H}f(x\,|\,S_{\ell})+ \sum_{x \in O_H \setminus G} f(x\,|\,S_{\ell}) \nonumber \\
            &\le (1+\e) f(S_{\ell}) + \!\!\sum_{x \in O_H \setminus (G \cup \tilde G)} \!\!f(x\,|\,S_{\ell}) \nonumber \\
            &< (1+\e) f(S_<) + \tau^* {c(O_H)} \,. 
        \end{align*}
        All in all, we have proved the desired inequality.
        \end{proof}
        
        To conclude the argument relative to $\p{\mathcal E_\ge} < \e$, we are left with relating the value of $S_<$ with that of $S^*$.

        \begin{claim}
        \label{cl:bound_le}
            If $\p{\mathcal E_\ge} < \e$, then the following inequality holds:
            \[
                \OPT \le (2(3 + \sqrt{3}) + O(\e)) \ALG. 
            \]
        \end{claim}
        \begin{proof}[Proof of \Cref{cl:bound_le}]
        Recall, $\mathcal E_<$ is the event that the sampled $H$ does not fill the budget, and we are assuming that has probability at least $(1-\e)$.
        For any iteration $\ell$ of the inner loop of the second block, the probability of $\mathcal G_{\ell}$ (i.e., the event that $c(S_{\ell})$ has cost less than $B/2$) conditioning on $\mathcal E_<$ is at least $\varepsilon$. To see this:
        \[
            (1-\e)\frac{B}{2} > \E{c(S_{\ell}) \mid \mathcal E_<} 
            \ge \E{c(S_{\ell})\,|\,\mathcal E_<\cap (\mathcal G_{\ell})^C}\p{(\mathcal G_{\ell})^C\mid \calE_<} \ge \p{(\mathcal G^{\ell})^C\mid \calE_<} \frac{B}{2}\,.
        \]
        So, the probability that $S^* = S_{\ell}$ for some $\ell$ under $\mathcal E_<$ is at least $(1-\e)$. Combining this inequality with the assumption on $\p{\mathcal E_<}$, we have that $\mathcal E_<$ and $\G_<$ are both realized with large probability:
        \begin{equation}
        \label{eq:EandG}
            \p{\mathcal E_< \cap \G_<} = \p{\G_< \mid \mathcal E_<} \p{\mathcal E_< } \ge (1-\e)^2.
        \end{equation}
        The first application of this inequality is to combine it with \Cref{cl:p_bound}: 
        \begin{align}
            p(1-p) f(O) &\le \E{f(S^* \cup O_H)} \tag{\Cref{cl:p_bound}}\\
        \nonumber
                &\le \E{f(S^* \cup O_H) \mid \mathcal E_<\cap\mathcal G_<} + 2 f(O) \p{\left(\mathcal E_{<} \cap\G_<\right)^C}\tag{\Cref{eq:EandG}} \\
                \label{eq:step1}
                &\le \E{f(S^* \cup O_H) \mid \mathcal E_<\cap\mathcal G_<} + 4 \e f(O).
        \end{align} 
        At this point, let's apply \Cref{cl:anyreal}: 
        \begin{align}
            \E{f(S^*\cup O_H) \mid \mathcal E_<\cap \mathcal G} &\le (1+\e) \E{f(S_<) \mid \mathcal E_<\cap\mathcal G} + \tau^* \E{c(O_H)\mid \mathcal E_< \cap \mathcal G} \tag{\Cref{cl:anyreal}}\\
            \label{eq:step2}
            &\le (1+\e) \E{f(S_<) \mid \mathcal E_< \cap\mathcal G} + \alpha \frac{p}{(1-\e)^2}\OPT,
        \end{align}
        where in the last inequality we used the definition of $\tau^*$ and that:
        \[
        pB = \E{c(O_H)} \ge \E{c(O_H) \mid \G_<\cap\mathcal E_<}\p{\G_< \cap \mathcal E_<} \ge \E{c(O_H) \mid \G\cap\mathcal E_<} (1-\e)^2.  
        \]
        Finally, \Cref{eq:EandG} can be used to prove that the expected value of $f(S_<)$ is not influenced much by the conditioning on $\mathcal E_< \cap \G_<$:
        \begin{align}
        \label{eq:step3}    \E{f(S_<)} 
            &\ge \E{f(S_<) \mid \G \cap \mathcal E_<} (1-\e)^2,
        \end{align}
        because $f$ is non-negativ.
        
        We can combine the three inequalities (first \Cref{eq:step1}, then \Cref{eq:step2} and finally \Cref{eq:step3}) together and obtain the following:
        \begin{align}
        \nonumber
            p(1-p)f(O) &\le \E{f(S^* \cup O_H) \mid \mathcal E_< \cap\mathcal G} + 4 \e f(O)\\
        \nonumber
            &\le (1+\e) \E{f(S_<) \mid \mathcal E_< \cap \mathcal G} + \alpha \frac{p}{(1-\e)^2}\OPT + 4 \e f(O)\\
            \label{eq:very_last}
            &\le \frac{1+\e }{(1-\e)^2}\E{f(S_<)} + \alpha \frac{p}{(1-\e)^2}\OPT + 4 \e \OPT.
        \end{align}
       This inequality, together with \Cref{cl:unconstrained}, yields:
        \begin{align*}
            p(1-p)\OPT &\le p(1-p)f(O^-) + p(1-p)f(O)\\
            &\le p(1-p) (2+\e)f(S_-) + \frac{1+\e }{(1-\e)^2}\E{f(S_<)} + \alpha \frac{p}{(1-\e)^2}\OPT + 4 \e \OPT\\
            &\le \left[ p(1-p)(2+\e) + \frac{1+\e }{(1-\e)^2}\right] \ALG + \left(\alpha \frac{p}{(1-\e)^2} + 4 \e\right) \OPT.
        \end{align*}
        Finally, we can rearrange everything to obtain:
        \[
            \OPT \le \frac{ p(1-p)(2+\e)(1-\e )^2 + (1+\e)}{p(1-p)(1-\e)^2 - \alpha p - 4 \e(1-\e)^2}\ALG
        \]
        If we plug in the chosen value of $\alpha$ and $p$, i.e., $\alpha = \nicefrac{(3 - \sqrt 3)}6$ and $p = \nicefrac{(\sqrt 3 - 1)}2$, we get the desired inequality:
        \[
            \OPT \le (2(3 + \sqrt{3}) + O(\e)) \ALG.\qedhere
        \]
\end{proof}
    Combining \Cref{cl:bound_ge}, \Cref{cl:bound_le} and the observation on the case when $f(x^*)\ge \alpha /2$ (as in the proof of \Cref{cl:anyreal}) concludes the proof of the Theorem.
\end{proof}

%====================================

\section{Variants and Implications}
\label{sec:variants}
    
    An exciting feature of our combinatorial approach is that---with few modifications---it yields a number of algorithms that match or improve the state-of-the-art in other scenarios, namely non-monotone submodular maximization with cardinality constraint (in \Cref{sec:non-monotone}) and monotone submodular maximization with knapsack constraints (in \Cref{sec:monotone}). Moreover, \threshold, as well as all its variants, can be tweaked to exhibit linear query complexity, at the cost of an extra $O(\log n)$ term in the adaptivity. More generally, it is possible to continuously interpolate between these two regimes, namely high adaptivity \& low queries - low adaptivity \& many queries, in an explicit way (\Cref{sec:trade_off}).

\subsection{Trading off Adaptivity and Query Complexity}
\label{sec:trade_off}

    We begin with an observation on the possible trade-offs between adaptivity and query complexity. \threshold needs $\tilde O(n^2)$ value queries because, in each iteration of its while loop (line \ref{line:while}) it locates $k^*$ by computing $O(n)$ marginal values with respect to the $O(n)$ prefixes of the sampled sequence $A$. Each iteration of the while loop is then characterized by a single adaptive round of computation (all the value queries are determined by $A$ and $S$ that are known immediately) and $\tilde O (n^2)$ value queries. If the mechanism designer is willing to trade some adaptivity with query complexity it is possible to spare $\Theta(\nicefrac{n}{\log n})$ value queries at the cost of $O(\log n)$ extra adaptive rounds, by using \emph{binary search} to locate $k^*$ in the while loop of \threshold. With this approach, the number of prefixes $A_i$ for which the algorithm needs to compute the marginals with respect to decreases to $O(\log n)$, at the cost of introducing some dependency in the value queries issued: for instance, the choice of the second prefix considered depends on the outcome of the value calls relative to the first prefix. For this approach to work, there is a last challenge to overcome: in order to find $k^*$ via binary search,  a carefully modified version of the value condition is needed, since the one used in \threshold exhibits a multi-modal behaviour. In particular, we want to tweak the value condition so that if it is triggered for a certain prefix $A_i$\footnote{Recall, prefix $A_i$ triggers the value condition if $i$ is considered in the $\argmax$ in line \ref{line:value_condition}: in formula, if it holds that $\e \sum_{x \in G_i}f(x\mid S \cup A_i) \le \sum_{x \in E_j} |f(x\mid S \cup A_j)|$)} it remains activated for all  $A_j$ for $j\ge i$ in the specific while loop iteration. 
    
    Following this approach, we can actually obtain a more fine-grained result, making explicit the trade off between adaptivity and query complexity by using $\gamma$-ary search instead of binary search. $\gamma$-ary search generalizes binary search and consists in dividing recursively the array in $\gamma$ equally spaced intervals and evaluate the conditions on the extremes of these intervals to locate $k^*.$ It is easy to observe that $\gamma$-ary search on an array of length $n$ terminates in $O(\nicefrac{\log n}{\log \gamma})$ steps while evaluating the conditions $O(\gamma)$ times at every step.

    \paragraph{The cost condition is already unimodal.} Fix any iteration of the while loop and call $S$ the solution at the beginning of it, $A$ the sequence drawn by \sampling (line \ref{line:sampling} of \threshold), $\{X_i\}_i$ the sequence of the sets of elements still fitting into the budget relative to each prefix $A_i$ (line \ref{line:X_i}) and $G_i$ (line \ref{line:G_i}), respectively $E_i$ (line \ref{line:E_i}), the subsets of $X_i$ containing the \emph{good}, i.e., marginal density greater than $\tau$,  and \emph{bad}, i.e., negative marginal density, elements, respectively. The cost condition (recall, a prefix $A_i$ respects the cost condition when $c(G_i) > (1-\e)c(X)$, see line \ref{line:cost_condition}) is clearly unimodal: the $\{X_i\}_i$ is a decreasing sequence of sets and hence $c(X_i)$ is a non-increasing sequence of costs, while $c(X)$ stays fixed: as soon as the cost of $X_i$ drops below $(1-\e)c(X)$ it stays there for all the prefixes longer than $A_i$.

    \paragraph{The value condition is not unimodal.} For the value condition we need a bit more work. The value condition in \threshold is verified for a prefix $A_i$ if the following inequality holds (as in line \ref{line:value_condition}):
    \[
        \e \,\sum_{x \in G_i} f(x\,|\,S \cup A_i) > \sum_{x \in E_i}|f(x\,|\,S \cup A_i)|.
    \]
    This condition is not unimodal: it may be the case that for a certain prefix $A_i$ the previous inequality is not verified (i.e., the value condition is triggered), but the inequality switches direction for some other prefix $A_j$ with $j > i$, in the same iteration of the while loop. This can happen for one of two reasons: either elements with negative marginals are added to the solution or they are thrown away due to budget constraint.  We want a modification which is robust to these corner cases. To this end, we add to the value condition the absolute contribution of two sets of items. We refer to the pseudocode of \thresholdgamma for the details.

    \begin{algorithm}[t]       \caption{\thresholdgamma$(X,\tau,\e,B)$}
        \label{alg:gamma_sequence}
        \begin{spacing}{1.15}
        \begin{algorithmic}[1]
            \STATE \textbf{Environment:} Submodular function $f$ and additive cost function $c$
            \STATE \textbf{Input:} set $X$ of elements, threshold $\tau>0$, precision $\e\in(0,1)$, budget $B$ and $\gamma \in \mathbb{N}$
            \STATE  \textbf{Initialization: }$ S \gets \emptyset $, $\textrm{ctr} \gets 0$
            \STATE $X \gets \{x \in X: f(x) \ge \tau \, c(x)\}$
            \WHILE{$X \neq \emptyset$ and $\textrm{ctr}<\nicefrac{1}{\e^2}$} 
            \STATE  $[a_1,a_2,\dots ,a_d]\leftarrow  \sampling(S,X,B)$\;
            \STATE \label{line:gammary} Find $k^*\gets\min\{i^*,j^*\}$ using $\gamma$-ary search \textbf{where:}
            \STATE \quad $A_i\gets \{a_1,a_2,\dots,a_i\}$ 
            \STATE \quad $X_i \gets \{a \in X\setminus A_i: c(a) + c(S \cup A_i) \le B\}$ 
            \STATE  \quad $G_i \gets \{a \in X_i: f(a\,|\,S \cup A_i) \ge \tau \cdot c(a)\}$
            \STATE  \label{line:old_negative}\quad $E_i \gets \{a \in X: f(a\,|\,S \cup A_i) < 0\}$ \COMMENT{Negative elements in $X$}
            \STATE \label{line:new_negative}\quad $\calE_i=\{a_t \in A_i: f(a_t\,|\,S \cup A_{t-1}) < 0\}$ \COMMENT{Negative elements in $A_i$}
            \STATE   \quad $i^* \gets \min \{i: c(G_i) \le (1-\e)c(X)\}$ 
            \STATE    \label{line:new_value_condition} \quad 
                $\displaystyle j^* \gets \min \Big\{j:  \e \sum_{x \in G_i}f(x\,|\,S \cup A_i) \le \sum_{x \in E_i} |f(x\,|\,S \cup A_i)| + \sum_{a_j \in \calE_i} |f(a_j\,|\,S \cup A_j)|\Big\}$
            \STATE $S \leftarrow S \cup A_{k^*}$, $X \gets G_{k^*}$
            \STATE \textbf{if} $j^* < i^*$ \textbf{then} $\textrm{ctr} \leftarrow \textrm{ctr}+1$ 
            \ENDWHILE
            \STATE \label{line:postprocessing} $\overline S \gets \{s_t \in S \mid f(s_t \mid \{s_1, s_2, \dots, s_{t-1}\}) > 0\}$
            \STATE \textbf{return} {$\overline S$}
        \end{algorithmic}
        \end{spacing}
        \end{algorithm}
    
    \paragraph{\thresholdgamma.} First, we redefine the set $E_i$ to contain also all the bad elements considered in that while loop, regardless of the budget condition, i.e., $E_i \gets \{a \in X: f(a\,|\,S \cup A_i) < 0\}$ (see line \ref{line:old_negative} of \thresholdgamma). Moreover for each prefix $A_i$, we define $\mathcal{E}_i$ as the set of all the items in the prefix $A_i$ which added negative marginal when inserted in the solution, i.e., $\calE_i=\{a_t \in A_i: f(a_t\,|\,S \cup A_{t-1}) < 0\}$ (see line \ref{line:new_negative}). The new value condition (see line \ref{line:new_value_condition}) then reads:
    \begin{equation}
        \label{eq:new_value_condition}
        \e \sum_{x \in G_i}f(x\,|\,S \cup A_i) > \sum_{x \in E_i} |f(x\,|\,S \cup A_i)| + \sum_{a_j \in \calE_i} |f(a_j\,|\,S \cup A_j)|\,.
    \end{equation}
    \begin{claim}
        The new value condition in line \ref{line:new_value_condition} of \thresholdgamma is unimodal.
    \end{claim}
    \begin{proof}
        We need to prove if the value condition is triggered for prefix $A_i$ (meaning that equation \ref{eq:new_value_condition} holds with $\le$), then it stays triggered for all prefixes $A_j$ with $j\ge i$. The left hand side of the condition is monotonically decreasing in $i$, while the right hand side is monotonically increasing, by submodularity and the fact that now $\{E_t \cup \calE_t\}_t$ is an increasing set sequence. 
    \end{proof}
    Now that both the cost and the value conditions are unimodal, it is clear that it is possible to perform $\gamma$-ary search in line \ref{line:gammary} of \thresholdgamma to find $k^*$. This means that the algorithm needs to issue way less value queries (as it needs to evaluate only logarithmically many prefixes and not all of them), at the cost of a worst adaptivity. While earlier it was possible to run in parallel the verification of the value and cost condition, now the algorithm needs to sequentially issue those queries, as the $\gamma$-ary search predicates. 
    Apart from this different value condition, in line \ref{line:postprocessing}, we add a post processing step before returning the new chunk of elements. Let $S = \{s_1,s_2, \dots, s_{|S|}\}$ be the set of candidate solution at the end of the while loop, then the algorithm filters out all the elements $s_t$ in $S$ such that $f(s_t \mid \{s_1, s_2, \dots, s_{t-1}\}) \le 0$. The resulting set is called $\overline S$. It is clear that $S$ and $\overline S$ respect the following two properties:
    \begin{itemize}
        \item $c(\bar{S})\le c(S)$
        \item $f(\bar{S}) \ge f(S)$
    \end{itemize}

    The analysis of the adaptive and query complexity of \thresholdgamma is similar to the one of \threshold, the only difference being the $\gamma$-ary search step. We have then the following result:
    \begin{lemma}
        \label{lem:non_monotone++_adaptivity}
            For any input $X \subseteq \N$, $\tau >0$, $\e \in (0,1)$, $B>0$, and $\gamma \in \{1,2,\dots,n\}$, \thresholdgamma terminates in $O\left(\frac{\log n \log (n\kappa_X)}{\e\log \gamma}\right)$ adaptive rounds and
        $O\left( \frac{n\gamma}{\e} \frac{\log n \log (n\kappa_X)}{\log \gamma}\right)$ value queries.
    \end{lemma}

    We then need to make sure that the good approximation properties of \threshold are retained. To this end, we prove two results, analogous to \Cref{lem:non_monotone_val,lem:non_monotone}. We start arguing that the expected marginal contribution of each element in $S$ is ``large enough''.
    \begin{lemma}
        \label{lem:non_monotone++_val}
       For any input $X \subseteq \N$, $\tau >0$, $\e \in (0,1)$, $B>0$, and $\gamma \in \{1,2,\dots,n\}$, the random sets $S$ computed by \thresholdgamma is such that $c(S) \le B$, and the following inequality holds true:
        \[
            {\E{f(S)}\ge \allowbreak (1-\e)^2\tau \,\E{c(S)}}.
        \]
    \end{lemma}
    \begin{proof}
        The proof of this Lemma is similar to that of \Cref{lem:non_monotone_val}, the only difference is that the new condition is stricter than the earlier one, so the inequality in \Cref{eq:value2} still holds.
    \end{proof}
    Note, the above Lemma, together with the properties of $S$ and $\overline S$ implies that 
    \begin{equation}
        \label{eq:aux_Sbar}
        \E{f(\overline S)}\ge \allowbreak (1-\e)^2\tau \,\E{c(S)}.
    \end{equation}

    Finally, we consider the set $G$ of the high value element with respect to $S$. 
    \begin{lemma}
    \label{lem:non_monotone++}
        For any input $X \subseteq \N$, $\tau >0$, $\e \in (0,1)$, $B>0$, and $\gamma \in \{1,2,\dots,n\}$, the random sets $S$ and $\overline S$ computed by \thresholdgamma are such that:
        \[f(\bar{S})\ge \e \ell \sum_{x \in G}f(x\,|\,S),
        \]
        where $G = \{x \in X \mid f(x \mid S) \ge \tau c(x)\}.$
    \end{lemma} 
\begin{proof}
    For all real numbers $a$ we denote with $a_+ = \max\{a,0\}$ its positive part and with $a_-= \max\{-a,0\}$ the negative one. Clearly $a = a_+ - a_-$.
    \begin{align*}
        f(S) &= \sum_{t=1}^T (f(s_t\,|\,S_{t})_+-f(s_t\,|\,S_{t})_-) \le \sum_{t=1}^T f(s_t\,|\,S_{t})_+ \le \sum_{s_t \in \bar{S}} f(s_t\,|\,\bar{S}_{t}) = f(\bar{S})\,,
    \end{align*}
    where in the last inequality we used submodularity. 

    Similarly to the proof for \threshold, consider $t_1,\ldots, t_{\ell}$, $E_{(1)},\allowbreak\dots,\allowbreak E_{(\ell)}$, $G_{(1)},\dots, G_{(\ell)}$, and $\calE_{(1)}, \allowbreak\dots,\allowbreak \calE_{(\ell)}$. Notice that they are all disjoint. Like before, for $s_i\in S$, $S_i$ denotes $\{s_1, \ldots, s_{i-1}\}$, but we slightly abuse the notation and have $S_{t_j}$ denote the set $S$ \emph{at the end} of the iteration of the outer while loop where $\textrm{ctr}$ is increased for the $\ell$th time, i.e., when the value condition is triggered for the $\ell^{th}$ time. We have
    \begin{align*}
        0 \le f\big(S_{t_\ell} \cup \textstyle\bigcup\limits_{j=1}^\ell E_{(j)}\big) &\le f(S_{t_\ell}) + f\big(\textstyle\bigcup\limits_{j=1}^\ell E_{(j)}\,|\,S_{t_\ell}\big) \\
        &\le \sum_{s_i \in S_{t_\ell}} (f(s_i\,|\,S_{i})_+-f(s_i\,|\,S_{i})_-) + \sum_{j=1}^\ell f(E_{(j)}\,|\,S_{t_j}) \\
        &\le \sum_{s_i \in S_{t_\ell}} (f(s_i\,|\,S_{i})_+-f(s_i\,|\,S_{i})_-) + \sum_{j=1}^\ell \sum_{x \in E_{(j)}} f(x\,|\,S_{t_j})\,.
    \end{align*}
    Rearranging terms, and using the value condition, we get
    \begin{align*}
        f(\bar{S}) \ge \sum_{s_i \in S_{t_\ell}} f(s_i\,|\,S_{i})_+ &\ge \sum_{s_i \in S_{t_\ell}} f(s_i\,|\,S_{i})_- +  \sum_{j=1}^\ell \sum_{x \in E_{(j)}} |f(x\,|\,S_{t_j})|\\
        &\ge \sum_{j = 1}^\ell\left[ \sum_{x \in E_{(j)}}|f(x\,|\,S_{t_j})| + \sum_{s_i \in \calE_{(j)}} |f(s_i\,|\,S_{i})|\right]\\
        &\ge\e\sum_{j = 1}^\ell \left[ \sum_{x \in G_{(j)}} f(x\,|\,S_{t_j})\right] \ge \ell \e \sum_{x \in G_{(\ell)}} f(x\,|\,S_{t_\ell})\,.
    \end{align*}
Observing that $S_{t_\ell} = S$ concludes the proof.
    \end{proof}
    
Using as routine \thresholdgamma instead of \threshold into \knapsack (let's call this modified algorithm \knapsackgamma) yields the main Theorem of the Section. Note that for $\gamma = n$, we recover the results of \Cref{thm:non_monotone}, while for $\gamma = 2$, i.e., binary search, we obtain a linear query complexity with $O(\log^2n)$ adaptive complexity. By the same analysis as in \Cref{thm:non_monotone} (with some minor modifications and using \Cref{lem:non_monotone++_val,lem:non_monotone++_adaptivity,lem:non_monotone++} instead than \Cref{lem:non_monotone_val,lem:non_monotone_adaptivity,lem:non_monotone}) we get the following result.
    
    \begin{theorem}
    \label{thm:non_monotone-binary}
        For any $\e \in (0,1)$ and $\gamma \in \{1,2,\dots,n\}$, \knapsackgamma with precision parameter $\e$ is a $(9.465 + O(\e))$-approximation algorithm that runs in $O(\tfrac{1}{\e}\frac{\log^2 n}{\log \gamma})$ adaptive rounds and issues $O(\frac{\gamma n\log^3 n\log{(\nicefrac 1\e)}}{\e^3\log \gamma})$ value queries.
    \end{theorem}
    
    If we plug in $\gamma = 2$ we get the first algorithm for non-monotone submodular maximization subject to knapsack constraint with sublinear adaptivity and nearly optimal $\tilde O(n)$ query complexity.
    \begin{corollary}
    \label{cor:non_monotone-binary}
        For any $\e \in (0,1)$, $2$-\knapsack with precision parameter $\e$ is a $(9.465 + O(\e))$-approximation algorithm that runs in $O(\tfrac{\log^2 n}{\e})$ adaptive rounds and issues $O(\frac{n}{\e^3}\log^3 n\log{\frac 1\e})$ value queries.
    \end{corollary}
    We conclude the section with a last remark: if we set $\gamma = n^{\delta}$, for some small (but positive) constant $\delta$, we obtain nearly optimal adaptive complexity of $O(\log n)$ and slightly-worst than-optimal query complexity of $O(n^{1+\delta})$.

\subsection{Monotone Submodular Functions}
\label{sec:monotone}

    For monotone objectives, the approximation ratio of \knapsack\ can be significantly improved, while the routine \threshold itself can be simplified. In particular, we do not need to address the value condition any more. Moreover, the small elements can be accounted for without any extra loss in the approximation. Beyond this, it is possible to trade a logarithmic loss in adaptivity for an almost linear gain in query complexity, as we did in the previous section. We start presenting the properties of \threshold when the underlying submodular function is monotone.

    \begin{lemma}
    \label{lem:monotone}
        Consider the problem of maximizing a monotone submodular function subject to a knapsack constraint. For any input $X \subseteq \N$,  $\tau>0$,  $\e \in (0,1)$,  and  $B>0$, the random set $S$ output by \threshold respects the following inequality:
        \[
        {\E{f(S)}\ge \tau(1-\e) \E{c(S)}}\,.
        \]
        Set $S$ is always a feasible solution and if $c(S) < B$, then all the elements in $X\setminus S$ have either marginal density with respect to $S$ smaller than $\tau$ or there is no room for them in the budget.
        Furthermore, the adaptivity is $O\left(\frac{1}{\varepsilon}\log \left(n\kappa_X\right)\right)$, while the query complexity is $O\left(\frac{n^2}{\varepsilon}\log \left(n\kappa_X\right)\right)$.
    \end{lemma}
\begin{proof}
    Again the proof is similar to the one for the non-monotone case in \Cref{thm:non_monotone}. There are three main differences. First, the adaptive complexity is given only by the number of times the cost condition is triggered, hence an upper bound is given by $\frac{1}{\varepsilon}\log \left(n\kappa_X\right)$. The query complexity is simply obtained multiplying that by a $n^2$ factor as in the proof of \Cref{thm:non_monotone}.
    
    Second, the algorithm can now only stop if the budget is exhausted or there are no good elements fitting within the budget; the while loop terminates only in those two cases.
    Finally, the main chain of inequalities is simply
    \[
        \mathbb{E}[f(s_t\,|\,S_{t})\,|\,\F_{t}] = \sum_{x \in X}p_x f(x\,|\,S_t) \ge \tau \sum_{x \in G}p_x c(x) \ge \tau (1-\e) \sum_{x \in X}p_x c(x) = (1-\e)\tau \,\E{c(s_t)\,|\,\F_t}\,,
    \]
    where in the second inequality we used the fact that the cost condition is not triggered. Taking the expectation on the whole process and reasoning as in the proof of \Cref{lem:non_monotone_val}, we conclude the proof. 
    \end{proof}

    We are ready to present the full algorithm for the monotone case \knapsackmonotone, whose details appear in the pseudocode. There are two main differences to the non-monotone case. First, there is no need to sample a subset $H$ and use the Sampling Lemma, since monotonicity implies that $f(S) \le f(S\cup O)$. Second, if one defines the small elements to be the ones with cost smaller than $\e \cdot \nicefrac Bn$ it is possible to account for them by simply adding all of them to the solution at the cost of filling an $\e$ fraction of the budget. Notice that this can be done while keeping $\kappa_{\N_+}$ linear in $n$. The remaining $(1-\e)$ fraction of the budget is then filled via \threshold on the large elements.

    \begin{algorithm}[!t]
        \caption{\knapsackmonotone}
        \label{alg:monotone_alg}
        \begin{spacing}{1.15}
        \begin{algorithmic}[1]
        \STATE \textbf{Environment:} Monotone Submodular function $f$, additive cost function $c$ and budget $B$
        \STATE \textbf{Input:} Precision parameter $\e \in (0,1)$ 
        \STATE $\alpha \gets \nicefrac 23$ 
        \STATE $\N_- \gets \{x \in \N: c(x) <  \nicefrac{\e B}{n}\}$, $\N_+ \leftarrow \N\setminus \mathcal{N_-}$ 
        \STATE $x^* \gets \argmax_{x \in \N}f(x)$, $\hat \tau \gets \alpha n \cdot \nicefrac {f(x^*)}{B}$ 
        \FOR{$i=0, 1,\dots,\lceil{\nicefrac {\log n}
{\e}\rceil}$ in parallel } 
        \STATE $\tau_i \gets \hat \tau \cdot (1-\e)^i$ 
        \FOR{$j=1, 2,\dots, \lceil\nicefrac 1\e \log(\nicefrac{1}{\e})\rceil$ in parallel} 
        \STATE $S^{i,j}\gets \threshold(\N_+,\tau_i,\e,(1-\e)B)$ 
        \STATE $T^{i,j} \gets S^{i,j} \cup \N_-$
        \ENDFOR
        \ENDFOR
        \STATE  $T\gets \argmax_{i,j}\{f(T^{i,j}),f(x^*)\}$
    \STATE \textbf{Return} $T$
    \end{algorithmic}
    \end{spacing}
    \end{algorithm}

\begin{theorem}
\label{thm:monotone}
    Consider monotone submodular maximization subject to a knapsack constraint. For any $\e \in (0,1)$, \knapsackmonotone with precision parameter $\e$ achieves a $(3+O(\e))$-approximation algorithm that runs in $O(\frac{1}{\e}\log n)$ adaptive rounds and  issues $O(\frac{n^2}{\e^3}\log^2 n\log\frac{1}{\e})$ value queries. 
\end{theorem}

\begin{proof}
    The bounds on adaptive and query complexity descend by combining \Cref{lem:monotone}, the fact that the thresholds are guessed in parallel, and the fact that $\kappa_{\N_{+}} \in O(\nicefrac{n}\e)$. 

    We move our attention to the approximation guarantee. Let $O^*$ be the optimal solution and let $\tau^*=\alpha \frac{f(O^*)}{B}$. By the usual argument we have that there exists a $\tau_{\hat i}$ such that ${(1-\hat \e)\tau^* \le \tau_{\hat i} < \tau^*}$. We focus on what happens for that $\tau_{\hat i}$ (so that we may omit the dependence on the index ${\hat i}$). Denote with $S$ the generic random output of \threshold with input $\N_+, \tau_{\hat i}, \e$ and $(1-\e)B$. We have two cases, depending on the expected cost of $S$. 
    If $\E{c(S)} \ge \frac B2(1- \e)^2$, then we apply \Cref{lem:monotone} and for any repetition $j$ of the inner for loop we have 
    \begin{equation}
    \label{eq:monotone_budget}
        f(O^*) \le \frac{2}{\alpha(1-\e)^4}f(S^{\hat i,j}) \le (3 + O(\e))f(T)\,.
    \end{equation}

Let's now address the other case, i.e., $\E{c(S)} < \frac B2(1 - \e)^2$. By standard probabilistic arguments, with probability at least $(1-\e)$, at least one of the iterations of the inner for loop (corresponding to $\tau_{\hat i}$) it holds that the realized $c(S^{{\hat i},j})$ is indeed smaller than $\frac B2(1 - \e)$. We call $\G$ the event that such property holds in at least one iteration (call $\hat j$ such iteration). Clearly there may be at most one item which belongs to the optimal solution $O^*$ but not in $S^{\hat i,\hat j}$. 
    \begin{align*}
        f(O^*) &\le f(S \cup O^*)\\
        &\le f(T \cup (O^*\setminus \N_-)) \le f(T) + f(\tilde x\,|\,T) + \sum_{x \in O^*\setminus \{\tilde x\}} f(x\,|\,T) \\
        &\le f(T) + f(x^*)  +  \sum_{x \in O^*\setminus \{\tilde x\}} f(x\,|\,S) \le f(T) + (f(x^*) - \alpha \tfrac{f(O^*)}2)+ \alpha f(O^*)\,.
    \end{align*}
    If $\nicefrac 2\alpha f(x^*) \ge \OPT$, then for sure our algorithm is a $3$ approximation, so we assume the contrary, which entails that the term $(f(x^*) - \alpha \tfrac{f(O^*)}2)$ in the previous inequality is negative and can be ignored.
    All in all, we have proved that under $\G$ the following inequality holds:
    \begin{equation*}
        f(O^*) \le (1-\alpha) f(S^{\hat i,\hat j}) \le 3 f(T).
    \end{equation*}
    By taking the expected value and using the fact that $\G$ has probability $1-\e$ we get that 
    \begin{equation}
    \label{eq:monotone_G}
        f(O^*) \le \frac{3}{1-\e} f(T) = (3 + O(\e)) f(T).
    \end{equation}
    The approximation result follows by combining together \eqref{eq:monotone_budget} and \eqref{eq:monotone_G}.
    \end{proof}

    When the submodular function is monotone, in  \threshold the only meaningful condition to address is the cost one, as the value condition is always respected. This implies that the task of finding $k^*$ in \threshold can be directly performed by $\gamma$-ary search, without any modification. If we call \knapsackgammamonotone the version of \knapsackmonotone that uses $\gamma$-ary search inside \threshold, we get the following result.
    \begin{corollary}
    \label{cor:monotone}
        Consider the problem of monotone submodular maximization subject to a knapsack constraint. For any $\e \in (0,1)$ and any $\gamma \in \{1,2,\dots,n\}$, \knapsackgammamonotone with precision parameter $\e$ achieves a $(3+O(\e))$-approximation algorithm that runs in $O(\frac{\log^2 n}{\e\log \gamma })$ adaptive rounds and $O(\frac{\gamma n}{\log \gamma \e^3}\log^3 n\log{\frac 1\e})$ value queries. 
    \end{corollary}
Note that the variant using $\tilde{O}(n)$ queries is the first $O(1)$-approximation algorithm for the problem combining this few queries with sublinear adaptivity.

% \begin{theorem}%[Monotone objective]
% \label{thm:monotone}
% For $\e \in (0,1)$ it is possible to achieve a $3+\e$ approximation in $O(\frac{1}{\e}\log n)$ adaptive rounds and  $O(\frac{n^2}{\e^3}\log^2 n\log\frac{1}{\e})$ queries \emph{or} in $O(\frac{1}{\e}\log^2 n)$ adaptive rounds and  $O(\frac{n}{\e^3}\log^3 n\log{\frac 1\e})$ queries. 
% \end{theorem}

% Note that the variant using $\tilde{O}(n)$ queries is the first $O(1)$-appro\-xi\-mation algorithm for the problem combining this few queries with sublinear adaptivity. 

\subsection{Cardinality constraints}
\label{sec:cardinality}
    Cardinality constraints is a special case of knapsack constraint, when the cost function is simply the counting function $c(S) = |S|$. Therefore \knapsack\ can be directly applied to cardinality constraints for (possibly) non-monotone objectives. Again, with some simple modifications, it is possible to achieve a much better approximation. In presence of cardinality constraints, in fact, there is no need to address separately small and large elements, as all elements impact in the same way on the constraint. See the pseudocode of \cardinal for further details. Moreover, if \threshold stops without filling the cardinality $k$, it means that no other element clears the threshold on the marginal contribution.

    \begin{algorithm}
    \caption{\cardinal}
    \label{alg:non_monotone_alg_card}
        \begin{spacing}{1.1}
        \begin{algorithmic}[1]
        \STATE \textbf{Environment:} Submodular function $f$ and cardinality constraint $k$
        \STATE \textbf{Input:} Precision parameter $\e \in (0,1)$ 
        \STATE $p \gets \sqrt{2}-1$,  $\alpha \gets 3 - 2\sqrt{2} $
        \STATE $x^* \gets \argmax_{x \in \N}f(x)$, $\hat \tau \gets \alpha n \cdot \nicefrac {f(x^*)}{B}$
        \FOR{$i=0, 1,\dots,\lceil{\nicefrac {\log n}
{\e}\rceil}$ in parallel } 
        \STATE $\tau_i \gets \hat \tau \cdot (1-\e)^i$ 
        \FOR{$j=1, 2,\dots, \lceil\nicefrac 1\e \log(\nicefrac{1}{\e})\rceil$ in parallel} 
        \STATE Sample independently a subset $H^{i,j} \gets \N(p)$
        \STATE $S^{i,j}\gets \threshold(H^{i,j},\tau_i,\e,k)$
        \ENDFOR
        \STATE Sample independently a subset $H_i \gets \N(p)$
        \FOR{$\ell = 1, 2, \dots, \lceil\nicefrac 1\e \log(\nicefrac{1}{\e})\rceil$ in parallel} 
            \STATE $S_{i,\ell}\gets \threshold(H_i,\tau_i,\e,k)$
        \ENDFOR
        \ENDFOR
        \STATE $S_{\ge} \in \argmax_{i,j} \{f(S^{i,j})\}$, $S_{<} \in \argmax_{i,\ell} \{f(S_{i,\ell})\}$ 
        \STATE \textbf{return}  $T \in\argmax\{f(S_{\ge}),f(S_<),f(x^*)\}$
        \end{algorithmic}
        \end{spacing}
        \end{algorithm}

\begin{theorem}
\label{thm:cardinality}
    Consider the problem of non-monotone submodular maximization subject to a cardinality constraint $k$. For any $\e \in (0,1)$, \cardinal with precision parameter $\e$ is a $(5.83 + O(\e))$-approximation algorithm that runs in $O(\frac1\e \log n)$ adaptive rounds and issues $O(\frac{nk}{\e^3}\log n \log k \log \frac{1}{\e})$ value queries.
\end{theorem}
\begin{proof}
    The adaptive and query complexity are as in the knapsack case (\Cref{thm:non_monotone}); the only difference being that now each sequence drawn from \sampling \ has at most length $k$.

    For the approximation guarantees our argument closely follows the same step as the ones in \Cref{thm:non_monotone}, so we only report the differences. The crucial difference is that now we say that $H$ fills the budget is, in expectation, the cardinality of the output of \threshold on $H$ is at least $(1-\e)^2k.$
    With this difference it is easy to see that the cardinality version of \Cref{cl:bound_ge} yields the following inequality:
    \[
                    \OPT \le \frac{1}{\alpha(1-\e)^6} \E{f(S_\ge)} \le (3 + 2\sqrt{2} + O(\e))\ALG.
                \]
    Consider now the other case, i.e., when the random $H$ does not fill the budget. The same arguments as in \Cref{cl:p_bound,cl:anyreal,cl:bound_le} carry over, with the simplification that if the budget (cardinality $k$) is not filled, then no element in $H$ but not selected by \threshold has marginal density larger than $\tau$. In particular, by \Cref{eq:very_last} and plugging the current value of $p$ and $\alpha$ 
    we get the desired result. Note, with respect to \Cref{thm:non_monotone} we do not need to account for the ``small'' elements.
\end{proof}

    Also for this problem we can use \thresholdgamma instead of \threshold, this gives the algorithm \cardinalgamma whose properties are summarized in the following Corollary.

    \begin{corollary}
    \label{cor:cardinality}
          Consider the problem of non-monotone submodular maximization subject to a cardinality constraint $k$. For any $\e \in (0,1)$ and $\gamma \in \{1,2,\dots,k\}$, \cardinalgamma with precision parameter $\e$ is a $(5.83 + \e)$-approximation algorithm that runs in $O(\frac{\log k}{\e\log \gamma }\log n)$ adaptive rounds and issues $O(\frac{\gamma n}{\e^3\log \gamma }\log n\log^2 k\log{\frac 1\e})$ value queries.
    \end{corollary}

    We highlight that, although we do not heavily adjust our algorithms to cardinality constraints, \Cref{thm:cardinality} is directly comparable to the results of \citet{EneN20} and \citet{Kuhnle21} which are tailored for this specific problem.

\section{Conclusions}

In this paper we close the gap for the adaptive complexity of  non-monotone submodular maximization subject to a knapsack constraint, up to a $O(\log \log n)$ factor. Our algorithm, \knapsack, is combinatorial and can be modified to achieve  trade-offs between adaptivity and query complexity. In particular, it may use nearly linear queries, while achieving an exponential improvement on adaptivity compared to existing algorithms with subquadratic query complexity. 

\section*{Acknowledgement}

This work was supported by the ERC Advanced Grant 788893 AMDROMA “Algorithmic and Mechanism Design Research in Online Markets”, the MIUR PRIN project ALGADIMAR “Algorithms, Games, and Digital Markets”, and the NWO Veni project No. VI.Veni.192.153.

\bibliographystyle{plainnat}
\bibliography{references}

\appendix

\section{Additional Combinatorial Properties}
\label{app:combinatorial}

    In this Section, we recall the definitions of some of the combinatorial notions we used through the paper.
    
    A set system $(\N,\mathcal I)$ is a family $\mathcal I \subseteq 2^{\N}$ of feasible subset of a ground set $\N$. Given a set system $(\N,\mathcal I)$ and a set $A\subseteq \N$, a base of $A$ is any feasible set $I \subseteq A$, $I \in \mathcal I$ that is maximal with respect to inclusion: $I \cup \{a\} \notin \mathcal I$ for any $a \in A \setminus I$. A natural complexity measure of a set system is given by the ratio of the largest and smallest basis that is possible to recover in any subset of the ground set.
    \begin{definition}[$k$-systems.]
        A set system $(\N,\mathcal I)$ is said to be a $k$-system if the following inequality holds true for any $A \subseteq \N$:
        \[
            \frac{\max_{B \in \mathcal B(A)} |B|}{\min_{B \in \mathcal B(A)} |B|} \le k,
        \]
        where $\mathcal B(A)$ is the family of all the basis contained in $A$. We adopt the convention that $\max$ over the empty set is $0$ and $\min$ over the empty set is $+\infty$.
    \end{definition}

    With this definition we can immediately argue that the set system generated by a knapsack constraint may be a $\Theta(n)$-system. This tells us that there may be an extremely large gap in cardinality between two different maximal solutions of the submodular maximization problem subject to a knapsack constraint. 

    \begin{example}[Knapsack are $\Theta(n)$-systems]
        Let $\N = \{x_1, x_2, \dots, x_n\}$ be a set of $n$ elements and $c$ be an additive cost function on $\N$ such that $c(x_i) = 1$ for all $i < n$, and $c(x_n) = n$. Consider now the set system $(\N,\mathcal I)$ induced by the knapsack constraint of budget $n$, i.e., all the subsets of $\N$ whose cost is at most $n$. There are two basis for this set system: $\{x_1, \dots, x_{n-1}\}$ and $\{x_n\}$, thus $(\N,\mathcal I)$ is an $(n-1)$-system.
    \end{example}

    While knapsack constraints are characterized by this wildly varying cardinality between basis, there is another well studied class of constraints that are $1$-systems.

    \begin{definition}[Matroids.]
    A non-empty set system $(\N,\mathcal I)$, with $\mathcal I \subseteq 2^{\N}$ is called a \emph{matroid} if it satisfies the following properties:
    \begin{itemize}
        \item[$(i)$]  \textit{Downward-closure}: if $A \subseteq B$ and $B \in \mathcal I$, then $A \in \mathcal I$; 
        \item[$(ii)$] \textit{Augmentation}: if $A, B \in \mathcal I$ with $|A| < |B|$, then there exists $e \in B$ such that $A + e \in \mathcal I$.
    \end{itemize} 
    \end{definition}
    \noindent By the augmentation property, is immediate to verify that matroids are $1$-systems.
\end{document}